\DeclareMathOperator{\E}{\mathbb{E}}%
\DeclareMathOperator{\R}{\mathbb{R}}%
\newcommand{\D}{\mathcal{D}}
\newcommand{\vect}[1]{\mathbf{#1}}
\newcommand{\expvi}{\E_{k \sim p} v_k^i}
\newcommand{\expv}{\E_{k \sim p} V_k}
\newcommand{\alternativeset}{\mathbb{Y}}
\newcommand{\sigmaj}[1]{\vec{\sigma}^{\,{#1}}}
\newcommand{\sigmaone}{\vec{\sigma}^{\,\boldsymbol{\cdot}, 1}}
\newcommand{\sigmatwo}{\vec{\sigma}^{\,\boldsymbol{\cdot}, 2}}
\newtheorem{lemma}{Lemma}
\newtheorem*{lemma*}{Lemma}
\newtheorem{corollary}{Corollary}
\newtheorem{theorem}{Theorem}
\newtheorem*{theorem*}{Theorem}
\newtheorem{proposition}{Proposition}
\newtheorem{observation}{Observation}
\theoremstyle{definition}\newtheorem{definition}{Definition}
\theoremstyle{definition}\newtheorem{remark}{Remark}
\theoremstyle{definition}
\begin{document}

 \title{Public Projects with Preferences and Predictions}
 \author{Mary Monroe and Bo Waggoner \\
 University of Colorado Boulder}
 \date{}
 
 \maketitle

\begin{abstract}
    When making a decision as a group, there are two primary paradigms: aggregating preferences (e.g. voting, mechanism design) and aggregating information (e.g. discussion, consulting, forecasting).
    Almost all formally-studied group decisionmaking mechanisms fall under one paradigm or the other, but not both.
    We consider a public projects problem with the objective of maximizing utilitarian social welfare.
    Decisionmakers have both preferences, modeled as utility functions over the alternatives; and information, modeled as Bayesian signals relevant to the alternatives' external welfare impact.
    Aligning incentives is highly challenging because, on the one hand, agents can provide bad information in order to manipulate the mechanism into satisfying their preferences; and on the other hand, they can misreport their preferences to favor selection of an alternative for which their information rewards are high.

    We propose a two-stage mechanism for this problem.
    The forecasting stage aggregates information using either a wagering mechanism or a prediction market (the mechanism is modular and compatible with both).
    The voting stage aggregates preferences, together with the forecasts from the previous stage, and selects an alternative by leveraging the recently-studied \emph{Quadratic Transfers Mechanism}. 
    We show that, when carefully combined, the entire two-stage mechanism is robust to manipulation of all forms, and under weak assumptions, satisfies Price of Anarchy guarantees.
    In the case of two alternatives, the Price of Anarchy tends to $1$ as natural measures of the ``size'' of the population grow large.
    In most cases, the mechanisms achieve a balanced budget as well.
    We also give the first nonasymptotic Price of Anarchy guarantee for the Quadratic Transfers Mechanism, a result of independent interest.
\end{abstract}

\section{Introduction} \label{sec:intro}
We study the public projects problem in which a group of decisionmakers is faced with selecting one of multiple alternative projects to undertake. 
Classically, the objective is to aggregate the \emph{preferences} of the decisionmakers in order to make a choice which most benefits the group as a whole.
Here, preferences will be modeled as quasilinear utility functions in a standard welfare-maximization framework, where a mechanism such as VCG might be employed.

Instead of preferences, a different philosophy for group decisionmaking is to focus on aggregating \emph{information} about which alternative is the best.
For example, a committee could make a decision either by voting (preferences), or by discussing until consensus is reached (information).
In the information-aggregation paradigm, \emph{decision markets}~\citep{othman2010decision} have been proposed as a group decisionmaking mechanism.
There, forecasts about the impact of each alternative are produced by running prediction markets -- financial markets whose prices reflect predictions -- and then a decision is automatically chosen based on the predicted-best alternative.
We likewise focus on \emph{predictions} as a source of credible, verifiable information.

A public projects problem can generally benefit from aggregation of both preferences and information.
Can we design a formal mechanism for groups to make decisions based on both?
There are only a few formal proposals in this spirit (see Section \ref{sec:related-work}).
When agents can serve as both information aggregators and preference-driven decisionmakers, they may have incentives to manipulate in both roles.

\paragraph{Model and main result.}
In this paper, we will propose a mechanism that aggregates both preferences and predictions to select a public project.
We consider monetary mechanisms and assume quasilinear utility, seeking social welfare guarantees.
In our setting, each participant $i$ has a value $v_k^i \geq 0$ for each project $k$, with $V_k = \sum_i v_k^i$.
Meanwhile, project $k$ has some \emph{external welfare impact} $B_k$ representing the impact on the social welfare of non-decisionmakers.
For example, $B_k$ could reflect an externality imposed by project $k$ such as carbon emissions or another environmental impact (see Section \ref{subsec:outline}).
In addition to their private values, the participants have private information, modeled as signals, relevant to each $B_k$.
We consider the \emph{expected} external impact of alternative $k$ conditioned on all signals as $B_k^*$.
We seek to aggregate the preferences $\{v_k^i\}$ and expected external impacts $\{B_k^*\}$ in order to maximize total social welfare: $\arg\max_k (V_k + B_k^*)$, the sum of welfare of the participants and the expected external impact.

The \emph{Synthetic players QUAdratic transfer mechanism with Predictions (SQUAP)} begins with an information aggregation phase, which may be a wagering mechanism or a prediction market, to elicit $B_k^*$ for each $k$.
Then, it uses a ``voting'' stage, based on the recently-introduced \emph{Quadratic Transfers Mechanism (QTM)} discussed below, to elicit preferences and select an alternative.
We obtain the following result on the \emph{Price of Anarchy} (the worst-case ratio between social welfare achieved by (1) an equilibrium of the mechanism and (2) the best alternative):
\begin{theorem*}[Main result, informal]
	In SQUAP, for the correct choice of parameters, for $m=2$ alternatives,
	\[ \text{Price of Anarchy} \geq 1 - O(T^{-2/5}) , \]
	where the \emph{spread} $T = \frac{\text{total welfare}}{\text{maximum individual value}}$ measures the size of the game.
\end{theorem*}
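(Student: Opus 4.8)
The plan is to decompose the analysis along the two phases of SQUAP and carry a single estimation-error parameter through both. As a preliminary, note that because the mechanism is budget balanced the monetary transfers cancel in the welfare accounting, so the only sources of loss relative to $\mathcal{W} := \max_k(V_k+B_k)$ are (i) the probability of selecting the worse alternative and (ii) having the mechanism optimize a ``perceived'' objective rather than the true one. Hence it suffices to bound the relative loss by (roughly) $q\,\Delta/\mathcal{W} + (\text{forecast-induced loss})/\mathcal{W}$, where $\Delta$ is the realized welfare gap between the two alternatives and $q$ is the equilibrium probability of choosing the worse one.

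\textbf{Phase 1 (information aggregation).} Under the knowledge assumptions of Impractical SQUAP, I would show that in every equilibrium of the induced subgame the prediction market or wagering mechanism outputs forecasts $\hat B_k$ within additive error $\eta = \eta(T)$ of the true impacts $B_k$: for the market this follows from myopic truthfulness of the market scoring rule together with the designer's knowledge of the information structure, and for the wagering mechanism from strict properness of the underlying scoring rule, while the earlier ``robust to manipulation'' result is what rules out profitable cross-phase manipulation. Feeding $\hat B_k$ into the synthetic players makes the selection stage optimize $V_k+\hat B_k$; since $|\hat B_k-B_k|\le\eta$, choosing the $\hat\cdot$-optimal alternative costs at most $2\eta$ in true welfare, contributing $O(\eta/\mathcal{W})$ to the relative loss.

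\textbf{Phase 2 (selection via QTM).} Conditional on the forecasts, this stage is exactly the QTM run on the real agents (bidding preferences $v_k^i$) together with synthetic players (encoding $\hat B_k$), so I would invoke the non-asymptotic Price of Anarchy bound for the QTM established earlier, applied to this augmented two-alternative population. Its core, which I would re-derive here, is a first-order characterization of equilibrium: each agent's bid is proportional to its marginal value for the decision times the sensitivity of the selection probability to its bid, so the aggregate bid is pinned down by a fixed-point equation in the perceived gap $W := (V_1+\hat B_1)-(V_2+\hat B_2)$, from which one reads off that $q$ decays once $W$ is large relative to the maximum individual value (equivalently, once $WT/\mathcal{W}$ is large). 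I then run the standard case split on the realized gap: when $\Delta\le\delta$ the relative loss is at most $\delta/\mathcal{W}$ even if the worse alternative is always chosen, and when $\Delta>\max\{\delta,2\eta\}$ the characterization forces $q$ to be small, so $q\,\Delta/\mathcal{W}\le q$ is small. Optimizing the threshold $\delta$ against these competing bounds (and against $\eta$) yields $1-O(T^{-2/5})$, the exponent $2/5$ being the outcome of that optimization.

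The step I expect to be the main obstacle is making the Phase-2 equilibrium analysis rigorous \emph{uniformly over all Nash equilibria}, not merely the interior first-order-condition one: bounding the worst-case aggregate bid (hence the worst-case $q$) requires controlling equilibria in which a small high-value coalition dominates the vote, and the constraint that no single agent be (nearly) pivotal caps the mechanism's sensitivity parameter, which is precisely what limits the achievable rate. A secondary difficulty is composing the two phases: one must check subgame-perfection and argue that the worst equilibrium of the whole game is no worse than chaining the worst equilibria of the parts, which is where the robustness-to-manipulation property does real work.
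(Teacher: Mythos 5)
Your proposal follows essentially the same route as the paper: bound the forecast error via alternative-independence (importance weighting) plus a liquidity-controlled deviation bound, feed the perturbed welfares $V_k+\hat B_k$ into the QTM first-order/gap analysis, and balance the small-gap and large-gap cases to extract the $2/5$ exponent; the two obstacles you flag are exactly what the paper handles via strict concavity of utilities for $c\geq\tfrac12\max_{i,k}v_k^i$ (making the first-order conditions characterize all pure equilibria) and via the sub-mechanism-perfect/focal-equilibrium refinement. The argument is correct in structure and matches the paper's.
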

More formally,  $T = \frac{\max_k (V_k + B_k^*)}{\max_{i,k} v_k^i}$, the ratio of first-best welfare to the largest individual value.
Thus, as the total welfare grows large relative to the maximum influence of any one person's preference, the Price of Anarchy tends to one.
For the case of $m \geq 3$ alternatives, we also obtain general results for SQUAP as a function of the Quadratic Transfer Mechanism's Price of Anarchy, which is an open problem to fully analyze.
\citet{Eguia23} obtain a Price of Anarchy asymptotically tending to $1$ for QTM for $m \geq 3$ under some assumptions, which implies via our results the same guarantee for SQUAP.

SQUAP does require assumptions on the mechanism's knowledge in order to implement it.
We give several suggestions for relaxing these assumptions; see Section \ref{subsec:outline}.
More broadly, the result proves that aligning incentives to base a decision on both preferences and predictions is possible, even when strategic participants act as both forecasters and decisionmakers.

\subsection{Outline of Techniques} \label{subsec:outline}

\paragraph{The QTM.}
Recent work~\citep{Eguia19,Eguia23} has introduced and studied the \emph{Quadratic Transfers Mechanism (QTM)} for public projects.
Inspired by quadratic voting~\citep{Lalley2018,Goeree2017}, the mechanism collects bids (``votes'') on each alternative, with cost rising quadratically, and picks an alternative according to a probabilistic ``softmax'' of the vote totals.
By redistributing payments, the mechanism is budget-balanced.

QTM turns out to have several nice properties for our two-stage problem that will be discussed below.
Our first technical result is an analysis of the QTM including Price of Anarchy guarantees (Section \ref{sec:qtm}).
Our approach builds on the work of \citet{Eguia19}, which gives asymptotic welfare bounds in a model with agents drawn i.i.d. from bounded value distributions, where realized values are common knowledge.
We consider any worst-case set of agents with no bounds or distributional assumptions, and we analyze pure-strategy Nash equilibrium.

\paragraph{Incorporating information on external welfare.}
In Section \ref{sec:external-welfare}, we take one step toward incorporating information into the QTM.
Before discussing these results, we digress briefly to explain the model of information in this paper.

Broadly, information can be relevant to the decisionmakers' own utilities and/or to an ``external'' welfare impact on non-decisionmakers.
Consider shareholders of a non-profit corporation voting on a proposal that includes a significant climate impact.
Shareholders' utilities may incorporate both the revenue generated and the relevance of the climate impact to themselves.
But in addition, the climate impact has an \emph{externality} on non-decisionmakers.
The charter of the non-profit could include an obligation to incorporate such externalities into decisions.
The total welfare of alternative $k$ is therefore $V_k + B_k^*$, the welfare of the decisionmakers plus the expected external welfare impact.

We consider a standard Bayesian model of information where decisionmakers observe private signals (the full model is presented in Section \ref{sec:general-mechanism}).
For simplicity in this paper, we consider information that is \emph{only relevant to $B_k$}, the external welfare impact.
We assume that decisionmakers fully know their own values before receiving any signals.
This model still captures the inherent strategic tension between satisfying preferences and acting on good information.
Further, our mechanism can easily be applied to a setting where decisionmakers' own preferences depend on private information, such as in common-value settings.
However, analyzing the strategic properties in such cases is left as an open problem.

Coming back to the results, in Section \ref{sec:external-welfare}, we modify the QTM to incorporate external welfare impacts in a case where the expectations $\{B_k^*\}$ of these impacts are known public information and non-manipulable.
We show that, by simulating synthetic agents, we can retain QTM's welfare guarantees as measured against total welfare $\max_k (V_k + B_k^*)$.
Because the mechanism simulates participants in the mechanism, it requires some information about values in order to play correctly, but this information is relatively minimal.

\paragraph{Combined preference-prediction mechanisms.}
In Section \ref{sec:predictions} we define a two stage mechanism, \emph{Synthetic players QUAdratic transfer mechanism with Predictions (SQUAP)}.
The first stage has agents participate in an information-aggregation mechanism, such as a wagering mechanism or prediction market, to produce estimates $\{\hat{B}_k\}$ of the external welfare impacts of the projects.
To produce the final decision, the second stage runs the Synthetic Players QTM from Section \ref{sec:external-welfare} with the aggregated estimations from the first stage as input.

To prove a Price of Anarchy bound, we take the following steps.
First, we consider two commonly-studied mechanisms for eliciting predictions: wagering mechanisms and prediction markets.
In each, the penalty for misreporting predictions can scale unboundedly as the mechanism grows ``large,'' even for misreports with small impact on the welfare estimates.
This ensures that decisionmakers, if their utilities are small relative to total social welfare, cannot gain more than $\epsilon$ by manipulating the final predictions, and even when they do, the final predictions are still highly accurate.
Second, we use an importance-weighting technique from the \emph{decision markets} literature~\citep{Chen11} so that forecasters' expected payoffs are independent of the decision of the mechanism.
This technique ensures that manipulating votes (preferences) cannot influence the expected rewards from providing information.
This property relies on the QTM because the mechanism always produces positive probabilities for each alternative.
Under some mild assumptions, we provide a Price of Anarchy guarantee, our main result, based both on the spread of the values $T$ and the size of the information-aggregation mechanism.
The latter can be tuned to match $T$, and it can be subsidized (to an extent) by the revenue from the QTM.

One point to emphasize is that both stages of our mechanism are non-truthful in equilibrium.
However, we show that the information stage is arbitrarily close to truthful and forecasts are highly accurate, while the QTM's welfare properties hold in non-truthful equilibrium.

\subsection{Related Work} \label{sec:related-work}
\paragraph{Public projects.}
There is naturally a significant amount of work on public projects from a mechanism-design perspective.
We focus on the setting of this paper, which is social welfare maximization with quasilinear-utility agents.
A standard solution is to use the Vickrey-Clarke-Groves mechanism (VCG)~\citep{vickrey1961counterspeculation,clarke1971multipart,groves1973incentives} which has an equilibrium that is truthful and maximizes social welfare.
However, VCG has some undesirable properties, including lack of a Price of Anarchy bound (there exist arbitrarily bad equilibria) and uncertain revenue (it may be very large or zero).
Literature addressing the problem with more nuance than our basic model considers the \emph{excludable} and \emph{non-excludable} cases, e.g. \citet{ohseto2000characterizations}.
The \emph{smoothness framework}~\citep{roughgarden2017price} has been used to analyze Price of Anarchy of public projects, particularly in a combinatorial setting.
In particular, \citet{lucier2013equilibrium} show that a first-price mechanism can achieve a Price of Anarchy of $1$ (i.e. guarantee optimal welfare) in a sequential setting where bidders iteratively submit public bids.

\paragraph{QTM.}
Recent research has considered transfer-based mechanisms which use quadratic payments ~\citep{Weyl2013OriginalQV,Lalley2018,Goeree2017}.
\citet{Lalley2018,Lalley2019} propose a binary-outcome quadratic voting mechanism and show it is asymptotically efficient.
\citet{Weyl2017robustness} then study this mechanism's sensitivity to collusion, fraud, and voter mistakes when the number of agents is large.
A number of other works study the behavior or empirical performance of quadratic voting~\citep{Chandar2019quadratic,Casella2019,Quarfoot2017,Goeree2017}.

Building on \citet{Lalley2019}, the QTM in particular is proposed by \citet{Eguia19} for $m \geq 2$ alternatives.
The authors consider a setting where participants' values for the $m$ alternatives are drawn i.i.d. from a distribution over $[0, 1]^m$, and realized values are common knowledge.
They prove that for an undetermined choice of the mechanism's scale parameter $c$, equilibria exist, and show that the probability of selecting the highest-welfare alternative approaches 1 as the number of agents diverges.
\citet{Eguia23} then study the same mechanism in a similar setting but under Bayes-Nash mixed-strategy equilibrium, with similar results.
The part of our work on the QTM builds heavily on the approaches in \citet{Eguia19,Eguia23}, but our results address the nonasymptotic regime without distributional assumptions.
This includes concrete bounds with small constants.

\paragraph{Information and decisionmaking.}

We first note that some classic problems can be viewed as melding information with decisionmaking.
In voting settings, jury theorems and related work (see e.g. \citet{conitzer2005common}) interpret voting rules as aggregating information.
A common-value auction (see e.g. \citet{kremer2002information}) is a case of private information relevant to utilities and decisionmaking as well.

A number of works study mechanisms for eliciting information from experts and using it to make a decision.
The most relevant mechanisms for this paper are decision markets~\citep{hanson1999decision,othman2010decision}, which simultaneously operate a prediction market for each alternative $k$.
(A prediction market is a financial market designed to aggregate beliefs into a consensus forecast, reflected in the prices of the financial products being traded~\citep{hanson2003combinatorial}.)
The decision market mechanism selects the alternative that is best according to the aggregated predictions (or uses a similar rule).
Even without preferences over the alternatives, decision markets can have complex incentive misalignment problems, although these can be fixed by the importance-weighting technique of \citet{Chen11}.

\paragraph{Consulting with experts.}
There are many works in which a single decisionmaker elicits information from one or more experts prior to making a decision, and the experts wish to influence the decision.
This includes the ``cheap talk'' model of \citet{crawford1982strategic} as well as Bayesian Persuasion~\citep{kamenica2011bayesian}.
Somewhat closer to our setting, \citet{oesterheld2020decision} consider ``decision scoring rules'' that incentivize truthful predictions \emph{and recommendations} by an expert.

For aggregating information from a group, e.g. \citet{Gerardi2009} consider a Bayesian setting in which a decisionmaker elicits information from a number of experts and aggregates it to make a decision.
To incentivize truthfulness, the decisionmaker with small probability audits an expert's report against the others (similar to peer prediction~\citep{miller2005eliciting}) and picks an alternative desired by the audited agent.
Such settings typically do not face the concern in this paper, where the decisionmakers are a large public group and may overlap with the forecasters.

\paragraph{Decisions from preferences and predictions.}
A few works are much closer to the spirit of our motivation, if technically quite different.
They generally analyze voter preferences as a function of the information to be aggregated.
In contrast, we model preferences as fixed with information as an orthogonal axis.
In the ``Wisdom-of-the-Crowd Voting Mechanism'' of \citet{schoenebeck2021wisdom}, each participant is both a voter and a holder of private information.
The mechanism elicits both information and a preference in a single shot, then aggregates both to make a binary decision.
The authors show that, with high probability, the mechanism selects despite strategic behavior the ``majority wish'' alternative that more voters would prefer if fully informed.

\citet{amanatidis2022decentralized} consider a similar problem motivated by blockchain applications.
Voters have private information and participate in an approval vote among $m \geq 2$ alternatives.
As with \citet{schoenebeck2021wisdom}, information aggregation takes place within the mechanism rather than through an explicit phase prior to voting.
The authors prove that the mechanism can achieve a Price of Anarchy of $\frac{1}{2}$.

In \citet{jackson2013deliberation}, a majority-vote between two alternatives takes place after a ``deliberation'' stage.
The experts who take part in deliberation cannot also be voters (unlike in our model), and cannot misreport arbitrarily, but only choose to either reveal or hide their information in order to influence the voters.
Other works in this spirit are \citet{alonso2016persuading,schnakenberg2015expert}.

\section{General Setting and Mechanism} \label{sec:general-mechanism}
We consider the problem of \emph{public projects with predictions}, in which a group of $n$ agents needs to select one of $m \geq 2$ alternatives.

We will use $i,j$ to denote generic agents and $k,l$ to denote generic alternatives.
Let $v_k^i \in \R_{\geq 0}$ be agent $i$'s value for alternative $k$.
We assume that all values are nonnegative.
We consider a Nash equilibrium setting where values are common knowledge among the agents.
Denote $\vect{v}^i = ( v_1^i, v_2^i, \ldots, v_m^i)$ as the vector of agent $i$'s values across all $m$ alternatives.
The \emph{aggregate value} for alternative $k$ is $V_k := \sum_{i=1}^n v_k^i$.

Each agent also receives a signal $S_i$ before the mechanism begins.
Let random variable $B_k \in \R_{\geq 0}$ represent the impact to \emph{external welfare} conditional on the $k$th alternative being chosen.
Signals are drawn jointly from a common-knowledge prior $\D$ jointly along with the welfare impacts, i.e. we have $(S_1,\dots,S_n,B_1,\dots,B_m) \sim \D$.
We let $\D_k$ be the marginal distribution of $\D$ over alternative $B_k$.
The mechanism does not know $\D$.

The expected external welfare impact of alternative $k$, given all available information, is $B_k^* := \E[B_k \mid S_1,\dots,S_n]$.
The expected social welfare of alternative $k$ is $W_k := V_k + B_k^*$, a random variable depending on the signals.
The objective is to maximize expected social welfare (referred to as simply social welfare).
We assume WLOG that $W_1 \geq \cdots \geq W_m$, so that choosing alternative $1$ maximizes social welfare.

\paragraph{Our mechanism: framework.}
We will consider mechanisms with two stages:

\begin{enumerate}
	\item{\emph{Forecasting.}} In the first stage, agents participate in an aggregation mechanism.
        We will formalize the mechanisms we use in Section \ref{sec:predictions}.
        The aggregation mechanism produces a vector of aggregated forecasts, the \emph{external welfare estimates} $\vect{\hat{B}} = (\hat{B}_1,\dots,\hat{B}_m)$.
	The aggregation mechanisms we consider, prediction markets and wagering mechanisms, reward participants based on the eventual observation of the actual outcome\footnote{In reality, the welfare $B_k$ may not be directly observable, but we may use estimates or proxies for $B_k$ instead and modify the mechanism accordingly.} $B_k$ once an alternative $k$ is chosen.
        At that time, the mechanism collects a payment $\pi^{i,1}$ from each agent $i$ as a function of the reports to the mechanism, the eventually-selected alternative $k$, and the realized outcome of $B_k$.
	\item{\emph{Decisionmaking.}} In the second stage, agents participate in the decisionmaking mechanism.
        For this stage, we will use a variant of the Quadratic Transfers Mechanism, defined in Section \ref{sec:qtm}.
	The decisionmaking mechanism, which uses the external welfare estimates $\{\hat{B}_k\}$ as part of its process, outputs a probability distribution $\vect{p} \in \Delta_m$.
        It also collects a payment $\pi^{i,2}$ from each agent $i$ as a function of the reports to the mechanism.
        The final group decision is $k$ drawn randomly according to $\vect{p}$.
\end{enumerate}
If the joint mechanism selects alternative $k$ under $\vect{p}$, the \emph{social welfare} of the mechanism is $\E_{k \sim \vect{p}} W_k := \E_{k \sim \vect{p}}[V_k + B_k^*]$.
Each agent's ex-post utility when alternative $k$ is selected is $v^i_k - \pi^{i,1} - \pi^{i,2}$.

\paragraph{Approach: backward induction on the stages.}
As discussed, there are conceptual incentive challenges in designing a mechanism for this context.
By using the two-stage approach, we modularize the problem to an extent.
In Section \ref{sec:predictions}, we will be able to use properties of wagering mechanisms and prediction markets to bound the impact of manipulation in stage 1.
However, before considering stage 1, we need to solve the second stage of the game.
Thus, it is useful for us to begin with analyzing the Quadratic Transfers Mechanism in Section \ref{sec:qtm}.
Then we will extend it in Section \ref{sec:external-welfare} to incorporate external welfare estimates (from non-strategic sources).

\section{Quadratic Transfers Mechanism} \label{sec:qtm}
In this section, we give results, including a Price of Anarchy bound, for the Quadratic Transfers Mechanism (QTM) for the public projects problem.
These results complement existing analyses~\citep{Eguia19,Eguia23}, and will also serve as a foundation for our results in later sections combining predictions and preferences.

\subsection{Model}
In this section, we consider the special case of our setting without external welfare; that is, the classic public projects setting.
There are $n$ agents with preference profile $\vect{v}^i$ who wish to select one of $m \geq 2$ alternatives.
In this section, the welfare of alternative $k$ is $V_k = \sum_{i=1}^n v_k^i$.
Without loss of generality, in this section we number the alternatives such that $V_1 \geq V_2 \geq \ldots \geq V_m$.
This is a special case of the setting in Section \ref{sec:general-mechanism}, where it is known that $B_k = 0$ for all alternatives $k$ and agents only participate in the second stage. 

A decisionmaking mechanism takes the actions of agents as input and outputs a probability distribution $\vect{p} \in \Delta_m$, along with a net payment from each agent.
We assume quasilinear utility: if the mechanism's output is $\vect{p}$ and if agent $i$ makes a net payment $\pi^i$, then $i$'s \emph{expected utility} is given by $\sum_k p_k v_k^i - \pi^i$.
The social welfare of the mechanism is $\sum_k p_k V_k$, the expected aggregate value produced.
The optimal or ``first-best'' social welfare is $V_1$, the aggregate value of the best alternative.

In general, the Price of Anarchy of a mechanism is the worst case ratio between the social welfare of the mechanism in equilibrium and the optimal social welfare.
Formally, given a mechanism and a value profile $\vect{v} = (\vect{v}^1,\dots,\vect{v}^n)$, let $\text{pNE}(\vect{v})$ denote the set of distributions $\vect{p} \in \Delta_m$ induced by the mechanism in any pure-strategy Nash equilibrium.
Then the \emph{pure-strategy Price of Anarchy} of the game with respect to $\vect{v}$ is
\begin{equation}
	\text{pPoA}{(\vect{v})} = \frac{\min_{\vect{p} \in \text{pNE}(\vect{v})} \sum_k p_k V_k} {V_1}. \label{eq:poa}
\end{equation}
We define the Price of Anarchy relative to a value profile $\vect{v}$ because we will give bounds depending on properties of $\vect{v}$; for instance, a Price of Anarchy bound that improves if all agents are ``small''.
However, for convenience, we may refer to the pPoA without explicitly including $\vect{v}$.

We say a mechanism achieves \emph{budget balance} if the net total payment of all agents is zero.

\paragraph{The QTM.}
In the Quadratic Transfers Mechanism (QTM)~\citep{Eguia19,Eguia23}, each agent $i$ simultaneously and privately submits a vector $\vect{a}^i \in \R^m$, where the $k$th entry $a_k^i$ represents a number of ``votes'' for alternative $k$.
Votes are allowed to be negative, representing voting against that alternative.
We let $A_k := \sum_i a_k^i$ denote the \emph{aggregate votes} for alternative $k$.

Each agent is charged $c \sum_k (a_k^i)^2$, i.e. the sum of squares of their votes multiplied by a parameter $c > 0$ chosen in advance by the mechanism designer.
Then, the mechanism chooses one alternative according to the ``softmax'' distribution $\vect{p} \in \Delta_m$ defined by
  \[ p_k = \frac{e^{A_k}}{\sum_l e^{A_l}}. \]
The QTM redistributes each agent's payment equally among the other voters, so that $i$ receives
\begin{equation} \label{eq:redistr-term}
	\frac{c}{n-1} \sum_{j \neq i} \sum_k (a_k^j)^2.
\end{equation}
Thus, the QTM is budget-balanced.\footnote{This redistribution does not affect the incentives of each voter, so it can be modified without changing the strategic properties of the mechanism. We will utilize this in some of proposed mechanisms later in the paper.}
Agent $i$'s utility function $u^i$, taken in expectation over the mechanism's randomness, as a function of the votes $\vect{a} \coloneqq (\vect{a}^1, \vect{a}^2, \ldots, \vect{a}^n)$, is
\begin{equation} \label{eq:utility}
	u^i(\vect{a}) = \sum_{k=1}^m p_k v_k^i - c \sum_{k=1}^m (a_k^i)^2 + \frac{c}{n-1} \sum\limits_{\substack{j=1 \\ j \neq i}}^n \sum_{k=1}^m (a_k^j)^2 .
\end{equation}
We note that agent $i$'s actions only affect the first two terms, so for strategic analysis, the third (redistribution) term is irrelevant.

\subsection{Price of Anarchy: Two Alternatives} \label{sec:two-alternatives}

In this section, we present our results for the QTM on $m=2$ alternatives. 
Conceptually similar results have already been shown by \citet{Eguia19,Eguia23} (see Section \ref{sec:related-work}), but they focus more on the challenges of the $m \geq 3$ setting and have different distributional and model assumptions than we do here.
We note that this section relies on an analysis of equilibria provided in Appendix \ref{app:equilibrium}.
Specifically, we first show pure Nash equilibria exist given a lower bound on $c$, and prove that for $m = 2$ alternatives equilibria are unique:

\begin{proposition} \label{prop:equilibrium-existence-uniqueness}
	In the QTM on agent value profile $\vect{v}$, if the mechanism chooses $c \geq \tfrac{1}{2} \max_{i,k} v_k^i$, then a pure-strategy Nash equilibrium exists.
        If $m=2$, it is unique.
\end{proposition}
The result is proven as Propositions \ref{prop:equilibrium-existence} and \ref{prop:equilibrium-uniqueness} in Appendix \ref{app:qtm}.

Then we observe several properties of the first order conditions for agent votes.
We also characterize the revenue (before redistributions) in terms of ``disagreement'' in the type profile in Appendix \ref{app:two-alternatives}, and briefly touch on the case where there are more than two alternatives in Appendix \ref{appendix:qtm-m-alternatives}.

\paragraph{Price of Anarchy result.}
Recall that $V_1$ is the aggregate value of the better alternative, i.e. optimal welfare, and $V_2$ is the aggregate value of the other.
Our main result, Theorem \ref{thm:qtm-main-thm}, is that the QTM's pure-strategy Price of Anarchy (pPoA) is always at least $\tfrac{1}{2}$ and approaches $1$ rapidly as the \emph{spread} of the game --- the ratio of $V_1$ to the largest individual value --- grows.
We also show in Corollary \ref{cor:qtm-gap-poa} a faster rate of convergence to $1$ when the \emph{gap} --- the difference between $V_1$ and $V_2$ --- grows.
Finally, we will characterize the total payments of the QTM in terms of the amount of ``disagreement'' of the participants.

\begin{theorem} \label{thm:qtm-main-thm}
	Define the \emph{spread} of the game to be $T \coloneqq \frac{V_1}{max_{i, k} v_k^i}$.
	In the two-alternative case, there exists a choice of $c$ such that the QTM achieves a Price of Anarchy
		\[ \text{pPoA}(\vect{v}) \geq \max \left\{ \frac{1}{2} ~,~~ 1 - \left(\frac{2}{T}\right)^{2/5} \right\} .  \]
\end{theorem}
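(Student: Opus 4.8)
The plan is to specialize everything to $m=2$, where the equilibrium conditions of Lemma~\ref{lem:foc} collapse to a single scalar fixed-point equation, and then to show that this forces the softmax probability $p_1$ to be bounded away from $\tfrac12$ by an amount that improves as $T$ grows. Write $p \coloneqq p_1$, $\Delta \coloneqq V_1 - V_2 \ge 0$, and $\sigma(z) \coloneqq 1/(1+e^{-z})$. By Corollary~\ref{cor:sum-indiv-votes-0}, $A_1 + A_2 = 0$ at every pure-strategy equilibrium, so the softmax rule becomes $p = e^{A_1}/(e^{A_1}+e^{-A_1}) = \sigma(2A_1)$, while the aggregate first-order condition~\eqref{eq:aggregate-foc} reads $A_1 = \tfrac{p(1-p)}{2c}\Delta$. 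Because $\Delta \ge 0$, this already forces $A_1 \ge 0$, hence $p \ge \tfrac12$, so the social welfare $V_2 + p\,\Delta \ge \tfrac12(V_1+V_2) \ge \tfrac12 V_1$; this gives $\text{pPoA}(\vect v) \ge \tfrac12$ for any $c$ admitting an equilibrium. (One checks in passing that the right-hand side $\tfrac{\Delta}{2c}\sigma(2A_1)(1-\sigma(2A_1))$ is nonincreasing in $A_1$ on $[0,\infty)$, so the fixed point, hence the equilibrium outcome, is unique; but this is not needed, since the estimates below only invoke the first-order conditions, which hold at \emph{every} pure equilibrium by Lemma~\ref{lem:foc}.)

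Next I would fix $c = \tfrac12\max_{i,k} v_k^i$, the smallest value for which Proposition~\ref{prop:equilibrium-existence} guarantees that an equilibrium exists; then $V_1/c = 2T$ and $\Delta/c \le 2T$. The welfare loss is $L \coloneqq 1 - \text{pPoA}(\vect v) = (1-p)\,\Delta/V_1$. Assuming $\Delta>0$ (else $L=0$ and $A_1 = 0$), the first-order condition gives $\Delta = 2cA_1/(p(1-p))$, so $L = 2cA_1/(pV_1) \le 4cA_1/V_1 = 2A_1/T$, using $p \ge \tfrac12$; it therefore suffices to upper bound $A_1$. For that, combine the first-order condition with $p \le 1$, $\Delta \le V_1$, and $1-p = \sigma(-2A_1) \le e^{-2A_1}$ to get $A_1 \le \tfrac{(1-p)V_1}{2c} = (1-p)T \le T e^{-2A_1}$, i.e. $(2A_1)\,e^{2A_1} \le 2T$. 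Since $z \mapsto z e^z$ is increasing on $[0,\infty)$, if we had $2A_1 > \ln(2T)$ then $(2A_1)e^{2A_1} > \ln(2T)\cdot 2T \ge 2T$ whenever $2T \ge e$, a contradiction; hence $2A_1 \le \ln(2T)$, and $L \le \ln(2T)/T$ for every $T \ge e/2$.

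Combining, $\text{pPoA}(\vect v) \ge \max\{\tfrac12,\ 1 - \tfrac{\ln(2T)}{T}\}$. Since $v_k^i \le V_k \le V_1$ for all $i,k$ we always have $T \ge 1$; for $1 \le T < 2$ one has $(2/T)^{2/5} \ge 1$, so the claimed bound reduces to the already-established $\text{pPoA}(\vect v) \ge \tfrac12$, while for $T \ge 2$ it remains to verify $\tfrac{\ln(2T)}{T} \le (2/T)^{2/5}$, equivalently $\ln(2T)\,T^{-3/5} \le 2^{2/5}$ — a one-variable calculus check whose left side peaks near $T = e^{5/3}/2 \approx 2.6$ at a value below $0.93 < 2^{2/5}$. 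The main obstacle is the middle paragraph: turning the implicit relation $A_1 = \tfrac{p(1-p)}{2c}\Delta$ with $p = \sigma(2A_1)$ into the explicit inequality $(2A_1)e^{2A_1}\le 2T$, which is exactly what converts ``the equilibrium nearly endorses the best alternative'' into a quantitative rate in $T$; the trivial $\tfrac12$ bound, the algebra for $L$, and the closing calculus are routine. (As written, this route actually yields the stronger loss $O(\log T / T)$; the stated $T^{-2/5}$ form can alternatively be obtained more crudely by balancing the two bounds $L \le \Delta/V_1$ and $L \le 1-p \le e^{-2A_1}$.)
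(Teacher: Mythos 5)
Your proof is correct, and it takes a genuinely different route from the paper's. The paper splits into two cases around a threshold $Y = (8c)^{2/5}(2V_1)^{3/5}$ on the gap $V_1-V_2$: when the gap is small it uses only $p_1 \ge \tfrac12$ so that $V_2$ itself is nearly $V_1$, and when the gap is large it invokes a separate lemma showing $1-p_1 \le (8c/(V_1-V_2))^{2/3}$; the exponent $2/5$ arises from balancing these two bounds. You instead bound the welfare loss $L=(1-p)\Delta/V_1$ directly as a product, use the first-order condition $A_1 = \tfrac{p(1-p)}{2c}\Delta$ to eliminate $\Delta$ (giving $L \le 2A_1/T$), and then extract $2A_1 \le \ln(2T)$ from the self-consistency inequality $(2A_1)e^{2A_1}\le 2T$. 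This avoids the case split entirely and yields the strictly stronger estimate $L \le \ln(2T)/T$, which is sharper not only than the stated $(2/T)^{2/5}$ but also than the $T^{-(1/2-\epsilon)}$ rate the paper's own remark says its method can reach; your closing calculus check that $\ln(2T)\,T^{-3/5}\le 2^{2/5}$ for $T\ge 2$ (together with the trivial $\tfrac12$ bound for $T<2$) correctly recovers the theorem as stated. All the ingredients you cite are available in the paper: existence of a pure equilibrium at $c=\tfrac12\max_{i,k}v_k^i$ is Proposition~\ref{prop:equilibrium-existence}, the first-order conditions at every pure equilibrium are Lemma~\ref{lem:foc}, and $A_1+A_2=0$ is Corollary~\ref{cor:sum-indiv-votes-0}. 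The one cosmetic caveat is that your parenthetical uniqueness observation is unnecessary (as you note) and slightly overreaches --- monotonicity of the fixed-point map gives uniqueness of $A_1$, hence of the outcome $\vect{p}$, but not of the full vote profile $\vect{a}$ --- so it is best omitted; it plays no role in the argument.
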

The result is proven in Appendix \ref{app:two-alternatives}.
The proof relies on a careful analysis of first-order conditions.
In particular, if the spread of the game is large, then there are two cases: $V_2$ is also large, in which case the welfare of any mechanism is large; or else the \emph{gap} $V_1-V_2$ is large, in which case we can prove a pPoA bound.
It also yields:
\begin{corollary} \label{cor:qtm-gap-poa}
	Define the \emph{gap} of the game to be $G \coloneqq \frac{V_1 - V_2}{max_{i, k} v_k^i}$.
	In the two-alternative case, there exists a choice of $c$ such that the QTM achieves a Price of Anarchy
		\[ \text{pPoA}(\vect{v}) \geq p_1 \geq \max \left\{ \frac{1}{2} ~,~~ 1 - \left(\frac{4}{G}\right)^{2/3} \right\} .  \]
\end{corollary}

The case of $m \geq 3$ is highly complex.
\citet{Eguia19,Eguia23} give an asymptotic analysis for a Bayesian setting showing a Price of Anarchy approaching one.
For a Nash equilibrium setting with no assumptions, we do not obtain such a result, but we do have a Price of Anarchy bound that does not depend at all on the number of participants or their values.

\begin{proposition} \label{prop:m-poa}
	For the QTM on $m$ alternatives, for any choice of $c \geq \tfrac{1}{2}\max_{i,k} v_k^i$, we have $\text{pPoA} \geq \frac{1}{m}$.
\end{proposition}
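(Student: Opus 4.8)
The plan is to argue directly from the first-order conditions of Lemma~\ref{lem:foc}, which hold in \emph{any} pure-strategy Nash equilibrium of the QTM, using the partition of alternatives into those whose aggregate value exceeds the mechanism's expected welfare and those below it. Write $W \coloneqq \expv = \sum_k p_k V_k$ for the equilibrium welfare, and note $W \le V_1$ trivially, so it suffices to prove $W \ge V_1/m$. The aggregate first-order condition $A_k = \tfrac{p_k}{2c}(V_k - W)$ gives the sign pattern $A_k \ge 0 \iff V_k \ge W$; in particular, since $W \le V_1$, the best alternative $1$ has $A_1 \ge 0$.

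First I would define $H = \{k : V_k \ge W\}$ and let $L$ be the remaining alternatives, so $1 \in H$ and $|L| \le m-1$. Using the softmax form $p_k = e^{A_k}/\sum_\ell e^{A_\ell}$ and the sign pattern, every $l \in L$ satisfies $A_l < 0 \le A_1$, hence $p_l < p_1$; summing, the total ``leaked'' mass obeys $p_L \coloneqq \sum_{l \in L} p_l < |L|\,p_1 \le (m-1)\,p_1$.

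Next I would lower-bound $W$ by discarding all but the favorable contributions:
\[ W = \sum_k p_k V_k \;\ge\; p_1 V_1 + \sum_{k \in H \setminus \{1\}} p_k V_k \;\ge\; p_1 V_1 + W\,(p_H - p_1), \]
where the second inequality uses $V_k \ge W$ for $k \in H$ and $V_l \ge 0$ for $l \in L$, and $p_H = \sum_{k \in H} p_k$. Rearranging and using $1 - p_H = p_L$ yields $W\,(p_L + p_1) \ge p_1 V_1$. Since $p_1 > 0$ and $p_L + p_1 < m\,p_1$ from the previous step, this gives $W > V_1/m$, hence $\text{pPoA} = W/V_1 \ge 1/m$. (If $L = \emptyset$ the same chain gives $W \ge V_1$, which is even stronger.)

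I do not anticipate a real obstacle; the proof is short once the right partition is in hand. The two points requiring care are: (i) invoking only the \emph{unconditional} direction of Lemma~\ref{lem:foc}, so the argument covers every pure equilibrium — the hypothesis $c \ge \tfrac12 \max_{i,k} v_k^i$ enters solely through Proposition~\ref{prop:equilibrium-existence}, to guarantee that such equilibria exist; and (ii) the probability comparison in the second paragraph, which is the step that forces the constant to be exactly $1/m$ rather than anything smaller. Conceptually, the ``high value'' group $H$ is the tool borrowed from \citet{Eguia23}; the difference from the two-alternative argument is that $H$ can now contain several alternatives, so rather than bounding $p_1$ from below one bounds the mass $p_L$ on low alternatives, which the sign pattern pins down at at most $(m-1)\,p_1$.
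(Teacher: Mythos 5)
Your proof is correct and follows essentially the same route as the paper's: partition the alternatives by whether $V_k$ exceeds the equilibrium welfare $\bar V$, use the sign of $A_k$ from the first-order conditions to get $p_l < p_1$ for every low alternative, and conclude via the softmax comparison that at most an $(m-1)p_1$ mass leaks to low alternatives. The paper packages the final step as a bound on the conditional welfare of $\{1\}\cup\{k: V_k<\bar V\}$ rather than your direct rearrangement $W\,(p_L+p_1)\ge p_1 V_1$, but the substance is identical.
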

The proof appears in Appendix \ref{appendix:qtm-m-alternatives}.
It involves iteratively dividing the alternatives into a group with ``high'' welfare (higher than the mechanism's expectation) and the remainder, an idea from \citet{Eguia23} used in a different way.

\section{Incorporating External Welfare} \label{sec:external-welfare}

In this section, we propose mechanisms for public projects with \emph{external welfare}, meaning estimates or projections of an externality of the project on non-decisionmakers.
We will suppose that the external welfare impacts are fixed and common knowledge of the agents and the mechanism.
Thus we are in a special case of the setting in Section \ref{sec:general-mechanism} without an aggregation stage, where the external welfare impacts are fully known in advance, i.e. $B_k = \hat{B}_k = B_k^*$ is the external welfare impact of alternative $k$.
In Section \ref{sec:predictions}, we will extend to the general case where external welfare impacts are forecasts from (manipulable) aggregation mechanisms.

\paragraph{Model.}
We define the problem of \emph{Public Projects with External Welfare} as an extension of the standard public projects setting where selecting alternative $k$ causes a known, fixed \emph{external welfare impact} $B_k$.
The $n$ agents continue to have values $\{v_k^i\}$, and the total value of alternative $k$ continues to be $V_k = \sum_{i=1}^n v_k^i$.
The welfare if $k$ is chosen is $W_k \coloneqq V_k + B_k$, and we assume without loss of generality that alternatives are numbered according to $W_1 \geq W_2 \geq \ldots \geq W_m$.
We let $\text{pPoA}(\vect{v}, \vect{B})$ be the pure-strategy Price of Anarchy of our mechanism in the setting $(\vect{v}, \vect{B})$, i.e. the worst-case ratio of welfare in pure-strategy Nash equilibrium to $W_1$.

\paragraph{Mechanism.}
We propose the \emph{Synthetic Players Quadratic Transfer Mechanism} for public projects with external welfare.
The mechanism works by creating a set of synthetic players whose preferences reflect the external welfare, then playing on their behalf in the QTM alongside the real participants.
It proceeds as follows:
\begin{enumerate}
  \item The mechanism is given $B_1,\dots,B_m$.
  \item Set $c \geq \tfrac{1}{2} \max_{i\leq n,k} v_k^i$.
  \item Define $\hat{n} \geq \frac{\max_k B_k}{2c}$ ``synthetic players,'' each with a value profile of $\frac{1}{\hat{n}} (B_1,\dots,B_m)$.
  \item Run the QTM on real agents $1,\dots,n$ and synthetic agents $\hat{1},\dots,\hat{n}$, where the mechanism plays on behalf of the synthetic players. Do not redistribute the payments of the synthetic players.
\end{enumerate}
We obtain the following Price of Anarchy guarantee for the mechanism:

\begin{proposition} \label{prop:synthetic-fixed-ppoa-bb-m}
  Suppose the QTM with $m$ alternatives on profile $\vect{v}$ guarantee $\text{pPoA}{(\vect{v})} \geq C(\vect{v}).$
  Then the Synthetic Players QTM on $m$ alternatives, for all external welfare vectors $\vect{B}$, achieves budget balance and guarantees $\text{pPoA}{(\vect{v}, \vect{B})} \geq C(\vect{v}).$
\end{proposition}

\begin{proof}
    Budget balance follows immediately from budget balance of the QTM.
    For each synthetic player $j=\hat{1},\dots,\hat{n}$, let her value vector be $\vect{\hat{v}}^j = \tfrac{1}{\hat{n}} \vect{B}$.
    Observe $c \geq \tfrac{1}{2} \max_{j,k} \hat{v}_k^j$, and by construction $c \geq \tfrac{1}{2} \max_{i\leq n,k} v_k^i$, so that a pure-strategy Nash equilibrium is guaranteed to exist by Proposition \ref{prop:equilibrium-existence-uniqueness}.
    Moreover, the welfare of each alternative $k$ in the game is $W_k$.
    The result follows.
\end{proof}
An immediate corollary, using the Price of Anarchy results in Theorem \ref{thm:qtm-main-thm} and Corollary \ref{cor:qtm-gap-poa} for $m = 2$, is:

\begin{corollary} \label{cor:synthetic-fixed-ppoa-bb}
  For the Synthetic Players QTM on two alternatives, for any profile $\vect{v}$, there exists a choice of parameter $c$ such that, for all external welfare vectors $\vect{B}$, the mechanism achieves budget balance and
  \begin{enumerate}
    \item $\text{pPoA}(\vect{v},\vect{B}) \geq \max \{ \tfrac{1}{2} , 1 - (\tfrac{2}{T})^{2/5} \}$, where $T = \frac{W_1}{\max_{i,k} v_k^i}$ is the spread;
    \item $\text{pPoA}(\vect{v},\vect{B}) \geq p_1 \geq \max \{ \tfrac{1}{2}, 1 - \left(\tfrac{4}{G}\right)^{2/3} \}$, where  $G = \frac{W_1 - W_2}{max_{i, k} v_k^i}$ is the gap.\footnote{Note that the inequality in terms of $p_1$ holds since any guarantees of the QTM for $\vect{p}$ extend to the external welfare setting by the proof of Proposition \ref{prop:synthetic-fixed-ppoa-bb-m}.}
  \end{enumerate}
\end{corollary}

\paragraph{Practical concerns and mechanism variants.}
To implement the mechanism, one needs to compute an equilibrium strategy for the synthetic players.
Although this only requires knowledge of the vote totals of the other players -- not their individual strategies -- it may be impractical in many cases.
One mitigating factor is that, in the two-alternative case, equilibrium is unique (see Proposition \ref{prop:equilibrium-existence-uniqueness}), so equilibrium selection is not a concern.

We also suggest two potential variants of the mechanism that remove the requirement for the mechanism to compute an equilibrium.
The first is a sequential voting game similar to the sequential first-price public projects mechanism of \cite{lucier2013equilibrium}.
We ask the ``real'' participants to cast votes publicly one at a time, with the mechanism going last on behalf of the synthetic players.
The second is a two-stage version, in which all ``real'' participants cast their votes in stage one simultaneously, and then the mechanism adds the synthetic votes in stage two.
In both of these suggestions, the mechanism computes the synthetic players' total votes and the output of the mechanism using the first-order conditions.
In the two-alternative case, for example, given the real players' votes $\{a_k^i\}$, the mechanism solves the fixed-point problem
  \[  p_1 = \frac{e^{\left(\sum_i a_1^i\right) + \tfrac{p_1p_2}{2c}(B_1 - B_2)}}{e^{\left(\sum_i a_1^i\right) + \tfrac{p_1p_2}{2c}(B_1 - B_2)} + e^{\left(\sum_i a_2^i\right) + \tfrac{p_1p_2}{2c}(B_2 - B_1)}} . \]
In practice, we expect both variants to be practical.
In particular, in large games, we do not expect participants to be able to strategize significantly to influence the result of the mechanism.
However, strategic analysis of these variants is left to future work.

\section{Incorporating Predictions} \label{sec:predictions}
Armed with the analyses in Sections \ref{sec:qtm} and \ref{sec:external-welfare}, we now return to the setting of public projects with predictions.
As a reminder from Section \ref{sec:general-mechanism}, an aggregation mechanism is operated first in order to advise a decisionmaking mechanism, which operates second.
The first stage produces as output forecasts $\vect{\hat{B}} = (\hat{B}_1,\dots,\hat{B}_m)$.
The second stage is to run the QTM with external welfare impacts $\vect{\hat{B}}$, as defined in Section \ref{sec:external-welfare}, which outputs a probability distribution $\vect{p}$ over the alternatives and selects from that distribution.

\subsection{The Two-Stage Game}
We begin by explicitly defining strategies and utilities in the two-stage game.
We remind the reader that the external welfare impact conditional on choosing alternative $k$, $B_k$, is now a random variable.
Agent $i$ receives a signal $S_i$ drawn jointly with $(B_1,\dots,B_m)$ from a common-knowledge prior $\D$.
When an agent acts in the first stage, we refer to them as a \emph{forecaster}; in the second stage, as a \emph{decisionmaker}.

As a forecaster, agent $i$ submits a prediction vector $\vect{\hat{b}}^i$ according to her strategy $\sigma^{i,1} : S_i \to \R^m$.
Let $\vect{\hat{b}} = (\vect{\hat{b}}^1, \ldots, \vect{\hat{b}}^n)$ be the set of prediction vectors across forecasters.
She is paid according to some function $\pi^{i,1}(\vect{\hat{b}}^i, \vect{p}, k,  b_k)$ depending on the eventual observation of the actual outcome $B_k$ of the alternative chosen by the joint mechanism.
Thus agent $i$'s expected utility $u^{i,1}$ in this stage corresponds to her expected payment: $u^{i,1}(\vect{\hat{b}}^i, \vect{p}) = \sum_{k} p_k \E_{b_k \sim \D_k} \pi^{i,1}(\sigma^{i,1}, \vect{p}, k,  b_k)$.
As a decisionmaker, agent $i$ also submits a vote vector $\vect{a}^i$ to the decisionmaking mechanism according to strategy $\sigma^{i,2}: (S_i, {\vect{\hat{b}}}, \vect{\hat{B}}, \vect{a}^{-i}) \to \R^m$. 
She receives some payment $\pi^{i,2}(\vect{a})$, so her expected utility $u^{i,2}$ in this stage (under the assumption of quasilinear utilities) is $u^{i,2}(\vect{a}, \vect{p}) = \sum_k p_k v_k + \pi_2^i(\vect{a}).$

In the sequential two-stage game, then, agent $i$'s strategy corresponds to $\sigmaj{i} = (\sigma^{i,1}, \sigma^{i,2})$, and her expected utility is $u^i(\vect{\hat{b}}^i, \vect{a}, \vect{p}) = u^{i,1}(\vect{\hat{b}}^i, \vect{p}) + u^{i,2}(\vect{a}, \vect{p})$. 
Let $\sigmaone = (\sigma^{1, 1}, \ldots, \sigma^{n,1})$ denote the set of strategies in the first stage, $\sigmatwo = (\sigma^{1,2}, \ldots, \sigma^{n,2})$ the set of strategies in the second stage, and $\vec{\sigma} = (\sigmaj{1}, \ldots, \sigmaj{n})$ the strategy profile over the entire mechanism.

\paragraph{Mechanism overview.}
The specific mechanism implementation we propose, the \textbf{Synthetic players QUAdratic transfer mechanism with Predictions (SQUAP)}, is as follows: 
\begin{enumerate}
  \item Forecasters participate in the aggregation mechanism, which produces $\vect{\hat{B}} = (\hat{B}_1,\dots,\hat{B}_m)$.
  \item We run the Synthetic Players QTM defined in Section \ref{sec:external-welfare} using external welfare impacts $\vect{\hat{B}}$.
  \item The output distribution is $\vec{p}$ and the selected alternative is $k$.
  \item Later, $B_k$ is observed and the aggregation mechanism rewards participants based on $\vec{p}, k, B_k$.
\end{enumerate}
Redistribution of QTM payments has the potential, at least in theory, to cause incentive problems.
For theoretical analysis, we assume that the mechanism keeps the payments or ``burns'' them.
In practice, designers may choose to redistribute them or use them to subsidize the aggregation mechanism, as the incentive concern is likely low, but we leave that analysis to future work.

\subsection{SQUAP Price of Anarchy}

\paragraph{Incentive alignment.}
Perhaps surprisingly, we only require two natural properties of an aggregation mechanism in order to prove welfare guarantees in equilibrium of the combined mechanism.
These properties address the two possible incentive problems previously mentioned: a voter may manipulate the predictions in order to boost some alternatives' chances of being chosen, and a forecaster may manipulate the decisionmaking mechanism in order to increase the rewards for their predictions.
First is the incentive of forecasters to manipulate the decision in order to gain higher prediction rewards.
\begin{definition} \label{def:alt-indep}
  An aggregation mechanism is \emph{alternative-independent} if the expected reward $u^{i,1}$ of any forecaster $i$ does not depend on the distribution $\vect{p}$ from which the final alternative is chosen, as long as $\vect{p}$ has full support.
\end{definition}
We will see that an importance-weighting technique due to \citet{Chen11} allows one to achieve alternative-independence, so that prediction rewards are independent of the decisionmaking mechanism (at least in expectation.)
Under alternative independence, a forecaster does not gain increased information rewards by changing the decisionmaking mechanism's output $\vect{p}$ because her utility is now only a function of her prediction, i.e. $u^{i,1}(\vect{\hat{b}}^i)$.
It follows that an agent's strategy $\sigma^{i,2}$ in any QTM subgame is to maximize utility in that complete-information subgame exactly as in Section \ref{sec:external-welfare}, when external welfare was fixed and known.

Next is the incentive of decisionmakers to manipulate their predictions $\vect{\hat{b}}$ in order to boost the chances of their preferred alternative(s).
In general, agents can gain from manipulating the estimated external welfare even at a cost.
However, we will be able to show that the QTM's welfare guarantees are robust to relatively small manipulations.
Therefore, we only need that large manipulations are prohibitively costly.
We formalize this costliness as follows.
Say that a player is \emph{$x$-best-responding} in a game if they can improve their net expected payoff $u^i$ by at most $x$ by switching to another strategy.
\begin{definition} \label{def:deviation-bound}
In an aggregation mechanism, let $\vect{B^*}$ be the output of the mechanism if all forecasters are truthful and let $\vect{\hat{B}}$ be the output under strategic behavior.
Given some $x > 0$ and $\alpha > 0$, we say the aggregation mechanism is $(\alpha, x)$-robust if, for all strategy profiles $\vec{\sigma}$ where all participants are $x$-best-responding,
  \[ \max_k |\hat{B}_k - B_k^*| \leq \alpha x  . \]
\end{definition}

\paragraph{Solution concept.}
Because SQUAP is a dynamic game, we consider a refinement of Nash equilibria: a strategy profile $\vec{\sigma}$ is in \emph{sub-game perfect pure equilibrium} if it is in pure Bayes Nash equilibrium and, in addition, the strategies $\sigmatwo$ played at each subgame consisting of the Synthetic Players QTM are in pure Nash equilibrium.
In a setting defined by valuations $\vect{v}$ and information structure $\D$, we let $\text{pPNE}(\vect{v},\D)$ denote the set of distributions $\vect{p} \in \Delta_m$ induced by the mechanism in any sub-game perfect pure equilibrium.
Then we define $\text{pPoA}(\vect{v},\D)$ as the minimum ratio of the welfare in sub-game perfect pure equilibrium to $W_1 = V_1 + B_1^*$, i.e. 
\begin{equation}
	\text{pPoA}{(\vect{v}, \D)} = \frac{\min_{\vect{p} \in \text{pPNE}(\vect{v}, \D)} \sum_k p_k W_k} {W_1}. \label{eq:poa-aggr}
\end{equation}

\paragraph{PoA results.}
To prove our main Price of Anarchy results, we first use the strategic properties of alternative-independence and deviation robustness to show that, in equilibrium, the estimates $\vect{\hat{B}}$ are reasonably accurate.
\begin{lemma} \label{lem:aggr-bhat}
	In any sub-game perfect pure equilibrium of SQUAP, if the aggregation mechanism is alternative-independent and $(\alpha, x)$-robust, then its output satisfies for all $k$ the bound $|\hat{B}_k - B_k^*| \leq \alpha \max_{i,k} v_k^i$.
\end{lemma}
\begin{proof}

	Fix any events of the aggregation mechanism and consider the subgame of the QTM.
	Recall that an agent's expected utility in the combined mechanism is $u^i(\vect{\hat{b}}, \vect{a}, \vect{p}) = u^{i,1}(\vect{\hat{b}}^i) + u^{i,2}(\vect{a}, \vect{p})$.
	By alternative-independence, $u^{i,1}(\vect{\hat{b}}^i) = C$ for some constant $C$ independent of $\vect{p}$. 
	Meanwhile, by design of the QTM without redistribution, $u^{i,2}(\vect{a}, \vect{p}) \leq \max_{i,k} v_k^i$ for any $\vect{p}$.
	It immediately follows that in any QTM subgame equilibrium, all agents are ($\max_{i, k} v_k^i$)-best-responding. 

	Now we aim to show that in any combined equilibrium, in the aggregation stage all agents are ($\max_{i, k} v_k^i$)-best-responding. 
	We proceed by contradiction: assume there is some agent who is not.
	By definition, the agent can deviate to a strategy in the aggregation mechanism which leads to a utility increase strictly larger than $\max_{i, k} v_k^i$.
	Moreover, if the agent deviates to voting $\hat{\vect{a}}^i = \vect{0}$ in the QTM, she loses at most $\max_{i, k} v_k^i$ in utility because (1) $u^{i,1}$ cannot change by alternative independence and (2) $u^{i,2}(\vect{a}, \vect{p}) - u^{i,2}(\vect{a}^{-i}, \hat{\vect{a}}^i, \vect{\hat{p}}) \leq \max_{i, k} v_k^i$ for any two vectors $\vect{p}, \vect{\hat{p}}$.
	It follows that this deviation strategy leads to a strictly positive change in utility; thus, the current strategy is not an equilibrium. 
	By contradiction, then, the statement holds. 
	Since all agents are $(\max_{i,k} v_k^i)$-best responding and the aggregation mechanism is $(\alpha, x)$-robust, the lemma follows.
\end{proof}

We are now able to bound on the Price of Anarchy in the $m = 2$ case by modifying our QTM Price of Anarchy analysis.
Note that the incentive for manipulation, captured by the size of $\alpha$ in $(\alpha,x)$-robustness, can grow larger slowly as the game grows large and the Price of Anarchy will still tend to one.
\begin{theorem} \label{thm:general-aggr}
	Define the \emph{spread} of the game to be $T \coloneqq \frac{W_1}{max_{i, k} v_k^i}$.
	Suppose we run SQUAP with an aggregation mechanism that is alternative-independent and $(\alpha, x)$-robust.
    Then in the two-alternative setting, there exists a choice of $c$ such that
		\[ \text{pPoA}(\vect{v},\D) \geq 1 - \frac{2\alpha}{T} - \left(\frac{4}{T}\right)^{2/5} . \] 
\end{theorem}

The general strategy of proving Theorem \ref{thm:general-aggr} mirrors our QTM analysis, where we consider the case of a large welfare gap $W_1 - W_2$.
We first prove:
\begin{lemma} \label{lem:aggr-gap-poa}
	If the aggregation mechanism is alternative-independent and $(\alpha, x)$-robust, and $W_1 - W_2 \geq 2 \alpha \max_{i,k} v_k^i$, then in any submechanism-perfect pure equilibrium of the two alternative setting,
		\[ p_1 \geq 1 - \left(\frac{8c}{W_1 - W_2 - 2 \alpha \max_{i,k} v_k^i}\right)^{2/3} . \]
\end{lemma}
\begin{proof}
	Define $\hat{W}_k = V_k + \hat{B}_k$, and recall $W_k = V_k + B^*_k$.
	By Lemma \ref{lem:aggr-bhat}, we have
	\begin{align*}
		\hat{W}_1 - \hat{W}_2
			&= W_1 - W_2 + \hat{B}_1 - B^*_1 + \hat{B}_2 - B^*_2  \\
			&\geq W_1 - W_2 - 2 \alpha \max_{i,k} v_k^i  \\
			&\geq 0 .
	\end{align*}
	In the QTM stage, the synthetic players' aggregate values are $\vect{\hat{B}}$.
	As mentioned before, under alternative independence forecasters will not manipulate their votes to increase utility in the aggregation stage.
	Thus, agents in equilibrium will vote strategically in the QTM stage exactly as in Section \ref{sec:external-welfare}.
	It follows by Corollary \ref{cor:synthetic-fixed-ppoa-bb} that
	\begin{align*}
		p_1 &\geq 1 - \left(\frac{8c}{\hat{W}_1 - \hat{W}_2}\right)^{2/3}  \\
			&\geq 1 - \left(\frac{8c}{W_1 - W_2 - 2\alpha \max_{i,k} v_k^i}\right)^{2/3} .
	\end{align*}
\end{proof}

\begin{proof}[Proof of Theorem \ref{thm:general-aggr}]
	As with the QTM analysis, we divide into cases.
	Let $c = \tfrac{1}{2}\max_{i,k} v_k^i$.
	Let $Y$ be the magic number $(8c)^{2/5}W_1^{3/5}$ that balances the following two cases.

	Case $W_1 - W_2 \leq 2\alpha \max_{i,k} v_k^i + Y$:
	In this case, the Price of Anarchy is at least
	\begin{align*}
		\frac{W_2}{W_1} &= \frac{W_1 - (W_1 - W_2)}{W_1}  \\
		&\geq 1 - \frac{2\alpha \max_{i,k} v_k^i + Y}{W_1}  \\
		&= 1 - \frac{2\alpha}{T} - \left(\frac{4}{T}\right)^{2/5} .
	\end{align*}

	Case $W_1 - W_2 \geq 2\alpha \max_{i,k} v_k^i + Y$:
	In this case, by Lemma \ref{lem:aggr-gap-poa}, the Price of Anarchy is at least
	\begin{align*}
		&1 - \left(\frac{8c}{W_1 - W_2 - 2\alpha \max_{i,k}}\right)^{2/3}  \\
		&\geq 1 - \left(\frac{8c}{Y}\right)^{2/3}  \\
		&= 1 - \left(\frac{4}{T}\right)^{2/5} .
	\end{align*}
	In both cases, $\text{pPoA} \geq 1 - \frac{2\alpha}{T} - \left(\frac{4}{T}\right)^{2/5}$.
\end{proof}

Equipped with Lemma \ref{lem:aggr-bhat}, we can also give our Price of Anarchy result for $m$ alternatives.
This result is phrased as a reduction from any Price of Anarchy bound for the QTM under the same scenario but without predictions.
The proof is similar to the above and appears in Appendix \ref{app:predictions}.
\begin{theorem} \label{thm:general-aggr-general-m}
	Define the \emph{spread} of the game to be $T \coloneqq \frac{W_1}{max_{i, k} v_k^i}$.
	Suppose we run SQUAP with an aggregation mechanism that is alternative-independent and $(\alpha, x)$-robust, and suppose the QTM with $m$ alternatives has a Price of Anarchy guarantee $\text{pPoA}{(\vect{v})} \geq C(\vect{v})$.
	Then the Price of Anarchy of SQUAP satisfies
		\[ \text{pPoA}(\vect{v},\D) \geq \frac{1}{1 + \alpha} \left( C(\vect{v}) - \frac{\alpha}{T} \right). \] 
\end{theorem}

We now have a Price of Anarchy bound of Theorem \ref{thm:general-aggr} for any information-aggregation mechanism that is alternative-independent and $(\alpha, x)$-robust.
Next, we show that two such information aggregation mechanisms do actually exist.

\subsection{Wagering Mechanisms}
We consider using wagering mechanisms as the information-aggregation mechanism in stage 1 of SQUAP.
Wagering mechanisms are a common tool for eliciting and aggregating predictions~\citep{lambert2008self,chen2015axiomatic}.
For simplicity, we will suppose that agents predict $\vect{B}$ directly and the mechanism is later able to exactly observe $B_k$, the true external welfare of the alternative $k$ that was selected.
When observing $B_k$ is difficult, one could consider wagers that are more complex and predict other relevant variables.
In this case, the mechanism could compute an estimate of $\vect{B}$ from the mechanism's forecasts and use the estimate in SQUAP.

We consider a natural extension that does not appear to have been studied: \emph{decision wagering mechanisms}.
Instead of agents wagering on a single future event, we will have them wager on the future event conditional on alternative $k$ being selected, for each $k=1,\dots,m$.
Once $k$ is selected and the event is observed, we cancel all other wagers and assign payments based on the predictions for $k$.

Specifically, we adapt the Brier betting mechanism \citep{Lambert2015} to our setting, defining the \emph{importance-weighted quadratic decision wagering mechanism}.
The Brier betting mechanism makes use of the quadratic score $s$ with parameter $\beta > 0$:
\begin{equation} \label{eq:quadratic-score}
	s(\hat{b},b) = -\frac{1}{\beta}(\hat{b} - b)^2 .
\end{equation}
$s$ is one example of a \emph{proper scoring rule} for the mean~\citep{savage1971elicitation,gneiting2007strictly}: the expected value of $s(\hat{b},b)$, over draws $b$ from some distribution $\D$, is maximized by choosing $\hat{b} = \E[b]$.
Following the prediction market literature, we refer to $\beta$ as the liquidity parameter. 
$\beta$ determines how sensitive the payoffs are to small changes in a prediction. 
If $\beta$ is small, then the total payoffs available are very large, but a significant payment may be required as well.

Formally, for each alternative $k$, forecaster $i$ submits an estimate $\hat{b}^i_k$ for the expected external welfare conditioned on that alternative. 
Let $\hat{\vect{b}}$ consist of all the players' predictions for all the alternatives.
Once an alternative $k \sim \vect{p}$ is chosen and the outcome $b_k$ is observed, each player $i$ receives the payoff
  \[ \pi_i(\vect{\hat{b}}, \vect{p}, k,  b_k) = \frac{1}{p_k} \left[ s(\hat{b}^i_k, b_k) - \frac{1}{n}\sum_{j = 1}^n s(\hat{b}^j_k, b_k) \right] .\]
In other words, forecaster $i$'s payoff consists of her score minus the average score of all participants.
Since $\sum_i \pi_i(\vect{\hat{b}}, \vect{p}, k, b_k) = 0$, we obtain:
\begin{observation}
  The importance-weighted quadratic decision wagering mechanism achieves budget-balance.
\end{observation}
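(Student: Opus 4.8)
The plan is to treat budget balance here as what it is --- a pointwise algebraic identity, requiring no equilibrium or strategic reasoning whatsoever. First I would fix an arbitrary realization of the mechanism: an alternative $k$ drawn from $\vect{p}$, the observed outcome $b_k^*$, and the vector of submitted predictions $\vect{\hat{b}}$. Then I would sum the payoff formula over all $N$ forecasters, factor the common multiplier $\tfrac{1}{p_k}$ out of the sum, and observe that
$\sum_{i=1}^N \bigl[ s(\hat{b}^i_k, b^*_k) - \tfrac{1}{N}\sum_{j=1}^N s(\hat{b}^j_k, b^*_k) \bigr] = \sum_{i=1}^N s(\hat{b}^i_k, b^*_k) - \sum_{j=1}^N s(\hat{b}^j_k, b^*_k) = 0$,
since the averaged term, summed over $i$, contributes exactly $N$ copies of the average, i.e. the full total. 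Hence $\sum_i \pi_i(\vect{\hat{b}}, \vect{p}, k, b_k^*) = 0$ for every realization, which is precisely the definition of budget balance (the net total payment of all agents is zero). This is exactly the line already flagged in the text immediately preceding the statement, so the proof amounts to spelling out that one computation.

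The only points worth remarking on --- and these are remarks, not obstacles --- are that the identity holds pointwise rather than merely in expectation, so this is ex post budget balance; that it is unaffected by the importance weight $\tfrac{1}{p_k}$, because that factor rescales every forecaster's payoff by the same amount and therefore rescales the (already zero) sum; and that the property is inherited from the subtract-the-average structure of the Brier betting mechanism of \citet{Lambert2015}, which our importance-weighting specialization to the decision setting preserves verbatim. There is no hard step; I would present it as the single displayed computation above.
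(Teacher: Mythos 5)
Your proof is correct and is exactly the paper's argument: the paper simply notes that $\sum_i \pi_i(\vect{\hat{b}}, \vect{p}, k, b_k^*) = 0$ because each payoff is a score minus the average score, and the importance weight $\tfrac{1}{p_k}$ factors out of the (already zero) sum. You have merely written out that one-line computation explicitly, including the correct observation that the identity holds pointwise (ex post), so there is nothing to add.
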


\paragraph{Wagers and agent loss.}
Above, we have considered the special case of a wagering mechanism where all wagers are equal to one.
In general, the wagering mechanism weights each participant's score $s(\hat{b}^i_k,b_k)$ by $\frac{y_i^k}{Y_k}$, where $y^i_k$ is some nonnegative wager on alternative $k$ and $Y_k = \sum_i y^i_k$.
The wagering mechanism is generally applied in cases where scores $s()$ are bounded in $[0,1]$, so that an agent can be guaranteed to win or lose no more than her wager.
In our case, such guarantees can be achieved by adjusting $\beta$, the liquidity parameter of $s$, if an upper bound on $B_k$ is known.
Wagering mechanisms can also be subsidized, e.g. by additionally giving each agent a fraction of their score, or by making all payouts nonnegative with a shift.
We could use a similar analysis to the prediction market setting to use the QTM revenue to subsidize the wagering mechanism (see Corollary~\ref{cor:prediction-market-bb}.)

\paragraph{Strategic behavior and aggregation.}
For analysis of this mechanism, unlike with prediction markets in Section \ref{subsec:prediction-markets}, we will need to adopt the standard wagering mechanism model of \emph{immutable beliefs}, where each agent $i$ has fixed beliefs $\vect{b}^i$ about the expectation of $\vect{B}$, regardless of others' beliefs.
(In a Bayesian model, we essentially have a ``no-trade'' situation in which the zero-sum payoffs discourage participation.)
The wagering mechanism in isolation is strategyproof: agents maximize expected net payoff by reporting their true believed estimates.
This follows immediately from the properness of the scoring rule $s$ and the fact that in $\pi_i$, the agent can only affect their own score.
For analysis, we suppose that the wagering mechanism outputs $\hat{B}_1,\dots,\hat{B}_m$ where $\hat{B}_k = \frac{1}{N} \sum_{i=1}^N \hat{b}_k^i$.
We assume that, if all agents are truthful, then the output $\vect{B^*}$ is the true expectation of the external welfare impacts.
Our analysis could be adapted to other aggregation methods than taking the average.

Now consider agent $i$. 
The only terms in her expected utility that $i$ controls are 
\begin{equation*}
	\frac{1}{p_k} \left( s(\hat{b}_k^i,b_k) - \frac{1}{n} s(\hat{b}_k^i,b_k) \right) = \left( \frac{n-1}{n} \right) \frac{1}{p_k} s(\hat{b}_k^i,b_k)
\end{equation*}
for each $k$.
If the outcome $b_k$ is drawn from a distribution $\D_k$, then the expected score over both $k \sim \vect{p}$ and $b_k \sim \D_k$ is
\begin{align*}
	\sum_k p_k \E_{\D_k} \left[ \left( \frac{n-1}{n} \right) \frac{1}{p_k} s(\hat{b}_k^i,b_k) \right] &= -\frac{n-1}{\beta n} \sum_k \left( (\hat{b}_k^i - b_k^i)^2 + \text{Var}(D_k) \right).
\end{align*}
The variance term is independent of any agents' actions, and moreover cancels out in an agent's payoff when we take the difference over different strategies. 
Therefore, it is without loss of generality to define the \emph{expected score} of a prediction $\vect{\hat{b}}$ to be
\begin{align}
	S(\vect{\hat{b}}^i; \vect{b}^i) &= -\frac{n-1}{\beta n} \sum_k (\hat{b}_k^i - b_k^i)^2 .  \label{eq:expected-score-wager}
\end{align}
Because (\ref{eq:expected-score-wager}) does not depend on the selected outcome $k$, we obtain:
\begin{observation} \label{obs:wagering-alt-indep}
	The importance-weighted quadratic decision wagering mechanism with the importance-weighted quadratic score is alternative-independent.
\end{observation}

\paragraph{Price of Anarchy.}
We now have the tools to analyze SQUAP with the importance-weighted quadratic decision wagering mechanism.
\begin{lemma} \label{lem:wagering-dev}
	The importance-weighted quadratic decision wagering mechanism with parameter $\beta = \frac{1}{2} \epsilon x$ is $(\epsilon^{1/2}, x)$-robust. 
\end{lemma}

\begin{proof}
	By Equation \ref{eq:expected-score-wager}, the loss from misreporting $\vect{\hat{b}}^i$ is
	\begin{align*}
		S(\vect{\hat{b}}^i;\vect{b}^i) - S(\vect{b}^i;\vect{b}^i) &= -\frac{n-1}{\beta n} \sum_k (\hat{b}_k^i - b_k^i)^2 \nonumber \\
		&\leq -\frac{1}{2 \beta} \sum_k (\hat{b}_k^i - b_k^i)^2,
	\end{align*}
	where the last line follows since $n \geq 2$. 

	By the definition of $x$-best-responding in the wagering mechanism,
	$x > \frac{1}{2 \beta} \sum_k (\hat{b}_k^i - b_k^i)^2$.
	So for all $k$, $(\hat{b}_k - b_k^i)^2 \leq 2 \beta x = \epsilon x^2$, or $|\hat{b}_k^i - b_k^i| \leq \epsilon^{1/2} x$.
	Now, this holds for all participants, and the output of the wagering mechanism is $\hat{B}_k = \frac{1}{n} \sum_{i=1}^n \hat{b}^i_k$, so by definition the mechanism is $(\epsilon^{1/2}, x)$-robust. 
\end{proof}

\begin{corollary} \label{cor:wagering-poa}
	Let $T = \frac{W_1}{\max_{i,k} v_k^i}$ be the spread.
	In the two alternative setting, SQUAP with the importance-weighted quadratic decision mechanism with wagering parameter $\beta = \frac{1}{2} \epsilon \max_{i,k} v_k^i$ and QTM parameter $c = \tfrac{1}{2} \max_{i,k} v_k^i$ satisfies
		\[ \text{pPoA}(\vect{v}, \D) \geq 1 - \frac{2\epsilon^{1/2}}{T} - \left(\frac{4}{T}\right)^{2/5} . \]
\end{corollary}

We observe that, even with constant liquidity $\epsilon = \Theta(1)$, the Price of Anarchy tends to one with the spread.
When $m > 2$ and the QTM with $m$ alternatives has a Price of Anarchy guarantee $\geq C(\vect{v})$, we could also apply Theorem~\ref{thm:general-aggr-general-m} to recover a Price of Anarchy guarantee $\frac{1}{1 + \epsilon^{1/2}}\left( C(\vect{v}) - \frac{\epsilon^{1/2}}{T}\right)$.

\subsection{Prediction Markets} \label{subsec:prediction-markets}
Next we consider using prediction markets as the information-aggregation mechanism in stage $1$ of SQUAP. 
Prediction markets may be considered more complex than wagering mechanisms since they are not one-shot.
However, while the designer must combine the resulting predictions from a wagering mechanism herself (often by averaging), prediction markets automatically aggregate information. 

Prediction markets are financial markets designed specifically for aggregating the predictions of agents by rewarding accuracy once an outcome is observed.
We briefly define scoring-rule based prediction and decision markets below, referring the reader to references in Section \ref{sec:related-work} for more background.

\paragraph{The market scoring rule.}
We define a scoring-rule prediction market for a real-valued random variable, following e.g. \citet{abernethy2013efficient}.
The market defines an initial estimate $\hat{b}^0$. Agents one at a time arrive in sequence and provide estimates $\hat{b}^1,\dots,\hat{b}^N$.
Later, when the outcome $b$ of the random variable is observed, the provider of each estimate $\hat{b}^t$ receives a net payoff $s(\hat{b}^t,b) - s(\hat{b}^{t-1},b)$, where $s$ is some proper scoring rule as in the wagering mechanism setting.
In particular, we will again focus on the quadratic scoring rule as defined in Equation \ref{eq:quadratic-score}.

Observe that each participant's net payoff can be positive or negative, and the total payment of the mechanism telescopes to $s(\hat{b}^N,b) - s(\hat{b}^0,b)$.
Following the prediction market literature and our section on wagering mechanisms, $\beta$ represents the liquidity parameter.
We have defined the market so that agents arrive only once each, but our results are robust to agents participating multiple times or at strategically chosen times, subject to an assumption discussed next.

\paragraph{Solution concept.}
Since the prediction market framework is not a one-shot game, but instead an extensive-form Bayesian game, we must modify our solution concept from subgame perfect equilibrium.
We say the combined mechanism is in \emph{sub-mechanism perfect equilibrium} if it is in Nash equilibrium and, in addition, the strategies played at each subgame consisting of the Synthetic Players QTM are in pure-strategy Nash equilibrium.
However, we will need further assumptions on the strategies played in the game as well.

Strategic behavior in prediction markets is highly complex and not fully understood~\citep{chen2010gaming,ostrovsky2012information,gao2013jointly,chen2016informational}.
However, one robust result of all of the above works is that, in the long run, agents do not stay misinformed.
They may be \emph{under}-informed, but indeed all information is often aggregated in equilibrium (with an appropriate solution concept such as Weak Perfect Bayesian equilibrium).
In particular, it is known that in strategic equilibrium of prediction markets, information is always aggregated under a condition called \emph{separable securities}~\citep{ostrovsky2012information}.
In other words, when information aggregation fails in prediction markets, it is not for strategic reasons, but only because of the expressiveness of the prediction language.
Therefore, we believe that efficient markets is a mild assumption on strategic behavior of the agents in the combined mechanism, and it can be achieved by introducing additional tradeable securities.
\begin{definition} \label{def:efficient-markets}
	We say the \emph{efficient markets assumption} holds on the decision market if, in any equilibrium, in any subgame (including off the equilibrium path), the last participant in the decision market knows and believes $\vect{B^*}$, the true expected external welfare impacts of the alternatives conditioned on all available information.
\end{definition}
In other words, we assume that a manipulator is not able to cause information to be hidden or the other participants to be misled.
However, a manipulator can still change the final predictions by waiting until the final stage to trade.

The impact of the efficient markets assumption is that a manipulator is best off if they are the one predicting last.
She waits for the markets to converge to the prediction $\vect{\hat{B}} = \vect{B^*}$, then manipulates $\vect{\hat{B}}$ arbitrarily.
We will show that such a manipulation will not be large enough to damage the welfare guarantee of SQUAP.

\paragraph{Decision markets.}
In a decision market (e.g. \citet{Chen11}), we simultaneously run $m$ prediction markets, one for each of $m$ alternatives.
Here, we model this as each participant $t=1,\dots$ providing an estimate $\vect{\hat{b}}^t \in \R^m$.
Then, we make a decision $k$ based on the outcome of the markets, and later observe $b_k$.
We cancel all trades in the other $m-1$ markets, i.e. we only assign payoffs based on predictions $\hat{b}_k^t$ made in the $k$th prediction market.

As we saw with decision wagering mechanisms, decision markets can lead to incentives to misreport and manipulate the decision.
We thus use the same importance-weighted scoring rule proposed by \citet{Chen11}:
	\[ S(\vect{\hat{b}}^t, \vect{p}, k, b_k) = \frac{1}{p_k} s(\hat{b}^t_k, b_k) . \]
Then if the outcome $b_k$ is drawn from a distribution $D_k$, the expected score over both $k \sim \vect{p}$ and $b_k \sim D_k$ is
\begin{align*}
	\sum_k p_k \E_{\D_k} S(\vect{\hat{b}}^t, \vect{p}, k, b_k)
	&= \sum_k \E_{\D_k} (\hat{b}_k^t - b_k)^2 \\
    &= \frac{-1}{\beta} \sum_k \left( (\hat{b}_k^t - b^*_k)^2 + \text{Var}(\D_k) \right),
\end{align*}
where $b_k^* = \E_{D_k} [b_k]$.
For the same reasons as before, we can ignore the variance term since it is independent of agent actions and cancels out in the market scoring rule.
Therefore, under the importance-weighted scoring rule, it is without loss of generality to define the \emph{expected score} of a prediction $\vect{\hat{b}}$, given the vector of means $\vect{b^*} = b_1^*,\dots,b_m^*$, to be
\begin{align}
	S(\vect{\hat{b}}; \vect{b^*}) &= \frac{-1}{\beta} \sum_k (\hat{b}_k - b_k^*)^2 .  \label{eq:expected-score}
\end{align}
We obtain:
\begin{observation} \label{obs:market-alt-indep}
	The decision market with the importance-weighted quadratic score is alternative-independent.
\end{observation}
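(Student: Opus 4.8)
The plan is to read the observation off the expected-score computation displayed just before it (culminating in \eqref{eq:expected-score}), while being careful about two points the displayed calculation glosses over: the expectation there should be understood as averaging over the decisionmaking mechanism's choice of alternative $k \sim \vect{p}$ as well as over the outcome $b_k^* \sim D_k^*$, and what matters for a participant is the market-scoring-rule \emph{difference} of two scores rather than a single score. Since SQUAP runs the aggregation stage strictly before the decisionmaking stage, every reported forecast --- in particular the pair $\vect{\hat{b}}^{t-1}, \vect{\hat{b}}^t$ straddling a given participant's report --- is fixed before $\vect{p}$ is determined, so I would treat these forecasts as constants independent of $\vect{p}$.

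The first step is to write a participant's net payment in the decision market: for the agent whose report moves the running forecast from $\vect{\hat{b}}^{t-1}$ to $\vect{\hat{b}}^t$, if the mechanism outputs $\vect{p}$, selects alternative $k$, and later observes $b_k^*$, then --- since all trades in the $m-1$ non-selected markets are cancelled --- the net payment is the negation of $\tfrac{1}{p_k}\bigl(s(\hat{b}_k^t, b_k^*) - s(\hat{b}_k^{t-1}, b_k^*)\bigr)$. The second step is to take the expectation over $k \sim \vect{p}$ and $b_k^* \sim D_k^*$: the importance weight $\tfrac{1}{p_k}$ cancels the selection probability $p_k$, leaving $-\tfrac{1}{\beta}\sum_k \bigl((\hat{b}_k^t - B_k)^2 + \mathrm{Var}(D_k^*)\bigr)$ for the $\vect{\hat{b}}^t$ term and the analogous expression for $\vect{\hat{b}}^{t-1}$. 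Subtracting the two, the $\mathrm{Var}(D_k^*)$ terms cancel and the participant's expected net payment equals $\tfrac{1}{\beta}\sum_k \bigl((\hat{b}_k^t - B_k)^2 - (\hat{b}_k^{t-1} - B_k)^2\bigr)$, which depends only on the (pre-determined) reports and the true means $\vect{B}$, hence not on $\vect{p}$ and so not on which alternative is selected. That is exactly the condition in Definition \ref{def:alt-indep}.

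There is no genuine obstacle, but one subtlety to respect: the claim would be \emph{false} for a single score conditioned on a fixed selected alternative, since $\tfrac{1}{p_k}\E_{D_k^*}[s(\hat{b}_k^t, b_k^*)]$ does depend on $k$; the cancellation genuinely requires both averaging over the mechanism's selection rule (to kill the $1/p_k$ weights) and using the telescoping structure of the market scoring rule (so the decision-irrelevant variance terms drop out). I would also note that the argument uses nothing about \emph{how} $\vect{p}$ is produced, so it applies verbatim when $\vect{p}$ is the QTM softmax of possibly-manipulated reports, which is the case needed in SQUAP.
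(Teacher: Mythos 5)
Your argument is correct and is essentially the computation the paper itself displays immediately before the observation: the importance weight $1/p_k$ cancels the selection probability, the expectation reduces to $-\tfrac{1}{\beta}\sum_k\bigl((\hat{b}_k^t-B_k)^2+\mathrm{Var}(D_k^*)\bigr)$ independent of $\vect{p}$, and the variance terms drop out in the telescoping market-scoring-rule difference. Your added remark --- that the independence holds only for the expectation over $k\sim\vect{p}$ and not conditionally on a fixed selected $k$ --- is a fair and useful clarification of how Definition \ref{def:alt-indep} is meant to be read, but it does not change the argument, which matches the paper's.
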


\subsubsection{Price of Anarchy}
We now consider the welfare guarantees of SQUAP with the decision market aggregation mechanism.
We leave the proof, which follows a similar pattern as the analogous result for wagering mechanisms, in Appendix~\ref{app:prediction-market}.
\begin{lemma} \label{lem:market-deviation}
	Under the efficient market assumption, for any $x > 0$, the importance-weighted decision market with the quadratic scoring rule and liquidity parameter $\beta = \epsilon x$ is $(\epsilon^{1/2}, x)$-robust. 
\end{lemma}

From Theorem \ref{thm:general-aggr}, we immediately get the following.
\begin{corollary} \label{cor:market-poa}
	Let $T = \frac{W_1}{\max_{i,k} v_k^i}$ be the spread.
	In the two alternative setting and under the efficient markets assumption, SQUAP with market parameter $\beta = \epsilon \max_{i,k} v_k^i$ and QTM parameter $c = \tfrac{1}{2} \max_{i,k} v_k^i$ satisfies
		\[ \text{pPoA}(\vect{v},\D) \geq 1 - \frac{2\epsilon^{1/2}}{T} - \left(\frac{4}{T}\right)^{2/5} . \]
\end{corollary}
As in the wagering mechanism setting, even with constant liquidity $\epsilon = \Theta(1)$, the Price of Anarchy tends to one with the spread.
Typically in a prediction market, we have $\epsilon \to 0$ with the size of the market~\citep{abernethy2014general}.
We would expect $\epsilon \to 0$ in large settings in practice, but it is not required for good Price of Anarchy.
This occurs because the QTM is robust to rougher estimates as the spread of the market grows.
While one might prefer a larger decision market in practice in hopes that it better aggregates information, this robustness may at least be reassuring.
We will also utilize it next to obtain budget balance in many cases.

\subsubsection{Budget balance}
Prediction markets are typically assumed to be subsidized.
Although the subsidy could come from charging transaction fees in the market, a natural question is whether the transfers of the QTM portion of the mechanism can be used to fund the aggregation mechanism.
In general, this is not always achievable, because as shown by \citet{Eguia19,Eguia23}, the revenue of the QTM stage can shrink to zero even as the spread of the market grows.
However, Proposition \ref{prop:revenue} allows us to characterize settings with growing revenue, therefore leading to a budget balanced mechanism that self-funds its decision market.
\begin{corollary} \label{cor:prediction-market-bb}
  Let the disagreement of a value profile $\vect{v}$ be $D := \frac{\sum_i (v_1^i - v_2^i)^2}{(W_1 - W_2)^2}$.
  Fixing the information structure $\D$, in a sequence of SQUAP mechanisms with growing spread $T \to \infty$ and bounded-below disagreement $D \geq \Omega(1)$, the mechanism does not lose money in expectation and $\text{pPoA} \to 1$.
\end{corollary}

\section{Discussion} \label{sec:discussion}

\paragraph{Summary.}
This paper's first contribution was a Price of Anarchy analysis of the Quadratic Transfers Mechanism, primarily in the two-alternative setting.
Prior work analyzed welfare in asymptotic settings with randomly drawn agents.
But results for such a setting, reminiscent of a ``strategic jury theorem,'' left open a question of robust guarantees with arbitrary sets of agents, as we gave in Section \ref{sec:qtm}.

The second contribution was to extend the QTM to a case with external welfare impacts of the decisions.
Many public projects settings involve impacts on non-decisionmakers, such as the climate impacts of a particular policy.
Section \ref{sec:external-welfare} proposed a mechanism for this problem, the Synthetic Players QTM, for which we provided strong Price of Anarchy guarantees. 
In particular, the mechanism approximately maximizes the welfare of decisionmakers plus externalities of the decision.
We also suggest two variants of the mechanism which could be more practically implemented by the designer. 

The third and main contribution was to propose decisionmaking mechanisms that combine \emph{preference aggregation} with \emph{information aggregation}.
We model the information aggregation stage as forecasting of the external welfare impacts of the decision.
By using either prediction markets or wagering mechanisms, we again showed that the the Synthetic Players QTM with Predictions (SQUAP) has strong Price of Anarchy guarantees.
The importance-weighting technique gave forecasters the same expected payoff for any decision of the mechanism.
The proper scoring rules enforced penalties for inaccurate predictions, which can only change the final decision with a small probability while causing a prohibitive penalty.

\paragraph{Future work.}
There are a number of technical open problems.
First is improving the QTM Price of Anarchy guarantee for more than two alternatives, or giving counterexamples for parameter regimes where it is impossible.
Second is proving a Price of Anarchy guarantee for the practical variants of the Synthetic Players QTM, or finding counterexamples.
Even in the latter case, we expect a ``Price of Stability'' result to be possible.
Third is to extend SQUAP to other information-aggregation mechanisms, perhaps with a more practical method than importance weighting, and prove guarantees for the variant.

Conceptually, there are many other possible approaches to decisionmaking with preferences and predictions.
In particular, our approach used monetary mechanisms and the quantitative criterion of social welfare maximization, which required a strong quasilinear assumption on participants.
Even within that sphere, we have only proposed one possible approach.
But outside of it, many other directions are possible.

\subsection*{Acknowledgements} 

Supported by the Ethereum Foundation grant FY22-0716.

\bibliographystyle{ACM-Reference-Format}
\bibliography{citations}

%%% -*-BibTeX-*-
%%% Do NOT edit. File created by BibTeX with style
%%% ACM-Reference-Format-Journals [18-Jan-2012].

\begin{thebibliography}{45}

%%% ====================================================================
%%% NOTE TO THE USER: you can override these defaults by providing
%%% customized versions of any of these macros before the \bibliography
%%% command.  Each of them MUST provide its own final punctuation,
%%% except for \shownote{}, \showDOI{}, and \showURL{}.  The latter two
%%% do not use final punctuation, in order to avoid confusing it with
%%% the Web address.
%%%
%%% To suppress output of a particular field, define its macro to expand
%%% to an empty string, or better, \unskip, like this:
%%%
%%% \newcommand{\showDOI}[1]{\unskip}   % LaTeX syntax
%%%
%%% \def \showDOI #1{\unskip}           % plain TeX syntax
%%%
%%% ====================================================================

\ifx \showCODEN    \undefined \def \showCODEN     #1{\unskip}     \fi
\ifx \showDOI      \undefined \def \showDOI       #1{#1}\fi
\ifx \showISBNx    \undefined \def \showISBNx     #1{\unskip}     \fi
\ifx \showISBNxiii \undefined \def \showISBNxiii  #1{\unskip}     \fi
\ifx \showISSN     \undefined \def \showISSN      #1{\unskip}     \fi
\ifx \showLCCN     \undefined \def \showLCCN      #1{\unskip}     \fi
\ifx \shownote     \undefined \def \shownote      #1{#1}          \fi
\ifx \showarticletitle \undefined \def \showarticletitle #1{#1}   \fi
\ifx \showURL      \undefined \def \showURL       {\relax}        \fi
% The following commands are used for tagged output and should be
% invisible to TeX
\providecommand\bibfield[2]{#2}
\providecommand\bibinfo[2]{#2}
\providecommand\natexlab[1]{#1}
\providecommand\showeprint[2][]{arXiv:#2}

\bibitem[Abernethy et~al\mbox{.}(2013)]%
        {abernethy2013efficient}
\bibfield{author}{\bibinfo{person}{Jacob Abernethy}, \bibinfo{person}{Yiling Chen}, {and} \bibinfo{person}{Jennifer Wortman~Vaughan}.} \bibinfo{year}{2013}\natexlab{}.
\newblock \showarticletitle{Efficient market making via convex optimization, and a connection to online learning}.
\newblock \bibinfo{journal}{\emph{ACM Transactions on Economics and Computation}} \bibinfo{volume}{1}, \bibinfo{number}{2} (\bibinfo{year}{2013}), \bibinfo{pages}{12}.
\newblock


\bibitem[Abernethy et~al\mbox{.}(2014)]%
        {abernethy2014general}
\bibfield{author}{\bibinfo{person}{Jacob~D. Abernethy}, \bibinfo{person}{Rafael~M. Frongillo}, \bibinfo{person}{Xiaolong Li}, {and} \bibinfo{person}{Jennifer Wortman~Vaughan}.} \bibinfo{year}{2014}\natexlab{}.
\newblock \showarticletitle{A general volume-parameterized market making framework}. In \bibinfo{booktitle}{\emph{Proceedings of the 15th Conference on Economics and Computation}} \emph{(\bibinfo{series}{EC})}.
\newblock


\bibitem[Alonso and C{\^a}mara(2016)]%
        {alonso2016persuading}
\bibfield{author}{\bibinfo{person}{Ricardo Alonso} {and} \bibinfo{person}{Odilon C{\^a}mara}.} \bibinfo{year}{2016}\natexlab{}.
\newblock \showarticletitle{Persuading voters}.
\newblock \bibinfo{journal}{\emph{American Economic Review}} \bibinfo{volume}{106}, \bibinfo{number}{11} (\bibinfo{year}{2016}), \bibinfo{pages}{3590--3605}.
\newblock


\bibitem[Amanatidis et~al\mbox{.}(2022)]%
        {amanatidis2022decentralized}
\bibfield{author}{\bibinfo{person}{Georgios Amanatidis}, \bibinfo{person}{Georgios Birmpas}, \bibinfo{person}{Philip Lazos}, {and} \bibinfo{person}{Francisco Marmolejo-Coss{\'\i}o}.} \bibinfo{year}{2022}\natexlab{}.
\newblock \showarticletitle{Decentralized Update Selection with Semi-strategic Experts}. In \bibinfo{booktitle}{\emph{International Symposium on Algorithmic Game Theory}} \emph{(\bibinfo{series}{SAGT})}. Springer, \bibinfo{pages}{403--420}.
\newblock


\bibitem[Casella and Sanchez(2019)]%
        {Casella2019}
\bibfield{author}{\bibinfo{person}{Alessandra Casella} {and} \bibinfo{person}{Luis Sanchez}.} \bibinfo{year}{2019}\natexlab{}.
\newblock \bibinfo{booktitle}{\emph{Storable votes and quadratic voting. An experiment on four {California} propositions}}.
\newblock \bibinfo{type}{{T}echnical {R}eport}. \bibinfo{institution}{National Bureau of Economic Research}.
\newblock


\bibitem[Chandar and Weyl(2019)]%
        {Chandar2019quadratic}
\bibfield{author}{\bibinfo{person}{Bharat Chandar} {and} \bibinfo{person}{E.~Glen Weyl}.} \bibinfo{year}{2019}\natexlab{}.
\newblock \showarticletitle{Quadratic voting in finite populations}.
\newblock \bibinfo{journal}{\emph{Available at SSRN 2571026}} (\bibinfo{year}{2019}).
\newblock


\bibitem[Chen et~al\mbox{.}(2010)]%
        {chen2010gaming}
\bibfield{author}{\bibinfo{person}{Yiling Chen}, \bibinfo{person}{Stanko Dimitrov}, \bibinfo{person}{Rahul Sami}, \bibinfo{person}{Daniel~M. Reeves}, \bibinfo{person}{David~M. Pennock}, \bibinfo{person}{Robin~D. Hanson}, \bibinfo{person}{Lance Fortnow}, {and} \bibinfo{person}{Rica Gonen}.} \bibinfo{year}{2010}\natexlab{}.
\newblock \showarticletitle{Gaming prediction markets: equilibrium strategies with a market maker}.
\newblock \bibinfo{journal}{\emph{Algorithmica}} \bibinfo{volume}{58}, \bibinfo{number}{4} (\bibinfo{year}{2010}), \bibinfo{pages}{930--969}.
\newblock


\bibitem[Chen et~al\mbox{.}(2011)]%
        {Chen11}
\bibfield{author}{\bibinfo{person}{Yiling Chen}, \bibinfo{person}{Ian Kash}, \bibinfo{person}{Mike Ruberry}, {and} \bibinfo{person}{Victor Shnayder}.} \bibinfo{year}{2011}\natexlab{}.
\newblock \showarticletitle{Decision markets with good incentives}. In \bibinfo{booktitle}{\emph{International Workshop on Internet and Network Economics}}. Springer, \bibinfo{pages}{72--83}.
\newblock


\bibitem[Chen and Waggoner(2016)]%
        {chen2016informational}
\bibfield{author}{\bibinfo{person}{Yiling Chen} {and} \bibinfo{person}{Bo Waggoner}.} \bibinfo{year}{2016}\natexlab{}.
\newblock \showarticletitle{Informational substitutes}. In \bibinfo{booktitle}{\emph{57th Annual IEEE Symposium on Foundations of Computer Science}} \emph{(\bibinfo{series}{FOCS})}. IEEE, \bibinfo{pages}{239--247}.
\newblock


\bibitem[Clarke(1971)]%
        {clarke1971multipart}
\bibfield{author}{\bibinfo{person}{Edward~H Clarke}.} \bibinfo{year}{1971}\natexlab{}.
\newblock \showarticletitle{Multipart pricing of public goods}.
\newblock \bibinfo{journal}{\emph{Public choice}} \bibinfo{volume}{11}, \bibinfo{number}{1} (\bibinfo{year}{1971}), \bibinfo{pages}{17--33}.
\newblock


\bibitem[Conitzer and Sandholm(2005)]%
        {conitzer2005common}
\bibfield{author}{\bibinfo{person}{Vincent Conitzer} {and} \bibinfo{person}{Tuomas Sandholm}.} \bibinfo{year}{2005}\natexlab{}.
\newblock \showarticletitle{Common voting rules as maximum likelihood estimators}. In \bibinfo{booktitle}{\emph{Proceedings of the Twenty-First Conference on Uncertainty in Artificial Intelligence}} \emph{(\bibinfo{series}{UAI})}. \bibinfo{publisher}{AUAI Press}, \bibinfo{pages}{145–152}.
\newblock
\showISBNx{0974903914}


\bibitem[Crawford and Sobel(1982)]%
        {crawford1982strategic}
\bibfield{author}{\bibinfo{person}{Vincent~P Crawford} {and} \bibinfo{person}{Joel Sobel}.} \bibinfo{year}{1982}\natexlab{}.
\newblock \showarticletitle{Strategic information transmission}.
\newblock \bibinfo{journal}{\emph{Econometrica: Journal of the Econometric Society}} (\bibinfo{year}{1982}), \bibinfo{pages}{1431--1451}.
\newblock


\bibitem[Debreu(1952)]%
        {Debreu52}
\bibfield{author}{\bibinfo{person}{Gerard Debreu}.} \bibinfo{year}{1952}\natexlab{}.
\newblock \showarticletitle{A social equilibrium existence theorem}.
\newblock \bibinfo{journal}{\emph{Proceedings of the National Academy of Sciences}} \bibinfo{volume}{38}, \bibinfo{number}{10} (\bibinfo{year}{1952}), \bibinfo{pages}{886--893}.
\newblock


\bibitem[Eguia et~al\mbox{.}(2023)]%
        {Eguia23}
\bibfield{author}{\bibinfo{person}{Jon Eguia}, \bibinfo{person}{Nicole Immorlica}, \bibinfo{person}{Steven~P. Lalley}, \bibinfo{person}{Katrina Ligett}, \bibinfo{person}{Glen Weyl}, {and} \bibinfo{person}{Dimitrios Xefteris}.} \bibinfo{year}{2023}\natexlab{}.
\newblock \bibinfo{title}{Efficiency in Collective Decision-Making via Quadratic Transfers}.
\newblock
\newblock
\showeprint[arxiv]{2301.06206}~[econ.TH]


\bibitem[Eguia et~al\mbox{.}(2019)]%
        {Eguia19}
\bibfield{author}{\bibinfo{person}{Jon Eguia}, \bibinfo{person}{Nicole Immorlica}, \bibinfo{person}{Katrina Ligett}, \bibinfo{person}{E.~Glen Weyl}, {and} \bibinfo{person}{Dimitrios Xefteris}.} \bibinfo{year}{2019}\natexlab{}.
\newblock \bibinfo{title}{Quadratic voting with multiple alternatives}.
\newblock
\newblock
\urldef\tempurl%
\url{https://econ.msu.edu/repec/wp/QuadMultAltshort19.pdf}
\showURL{%
\tempurl}


\bibitem[Gao et~al\mbox{.}(2013)]%
        {gao2013jointly}
\bibfield{author}{\bibinfo{person}{Xi~Alice Gao}, \bibinfo{person}{Jie Zhang}, {and} \bibinfo{person}{Yiling Chen}.} \bibinfo{year}{2013}\natexlab{}.
\newblock \showarticletitle{What you jointly know determines how you act: strategic interactions in prediction markets}. In \bibinfo{booktitle}{\emph{Proceedings of the 14th ACM Conference on Electronic Commerce}} \emph{(\bibinfo{series}{EC})}. ACM, \bibinfo{pages}{489--506}.
\newblock
\showISBNx{978-1-4503-1962-1}
\urldef\tempurl%
\url{https://doi.org/10.1145/2482540.2482592}
\showDOI{\tempurl}


\bibitem[Gerardi et~al\mbox{.}(2009)]%
        {Gerardi2009}
\bibfield{author}{\bibinfo{person}{Dino Gerardi}, \bibinfo{person}{Richard McLean}, {and} \bibinfo{person}{Andrew Postlewaite}.} \bibinfo{year}{2009}\natexlab{}.
\newblock \showarticletitle{Aggregation of expert opinions}.
\newblock \bibinfo{journal}{\emph{Games and Economic Behavior}} \bibinfo{volume}{65}, \bibinfo{number}{2} (\bibinfo{year}{2009}), \bibinfo{pages}{339--371}.
\newblock


\bibitem[Glicksberg(1952)]%
        {Glicksberg52}
\bibfield{author}{\bibinfo{person}{Irving~L Glicksberg}.} \bibinfo{year}{1952}\natexlab{}.
\newblock \showarticletitle{A further generalization of the {Kakutani} fixed theorem, with application to {Nash} equilibrium points}.
\newblock \bibinfo{journal}{\emph{Proc. Amer. Math. Soc.}} \bibinfo{volume}{3}, \bibinfo{number}{1} (\bibinfo{year}{1952}), \bibinfo{pages}{170--174}.
\newblock


\bibitem[Gneiting and Raftery(2007)]%
        {gneiting2007strictly}
\bibfield{author}{\bibinfo{person}{Tilman Gneiting} {and} \bibinfo{person}{Adrian~E. Raftery}.} \bibinfo{year}{2007}\natexlab{}.
\newblock \showarticletitle{Strictly proper scoring rules, prediction, and estimation}.
\newblock \bibinfo{journal}{\emph{J. Amer. Statist. Assoc.}} \bibinfo{volume}{102}, \bibinfo{number}{477} (\bibinfo{year}{2007}), \bibinfo{pages}{359--378}.
\newblock


\bibitem[Goeree and Zhang(2017)]%
        {Goeree2017}
\bibfield{author}{\bibinfo{person}{Jacob~K Goeree} {and} \bibinfo{person}{Jingjing Zhang}.} \bibinfo{year}{2017}\natexlab{}.
\newblock \showarticletitle{One man, one bid}.
\newblock \bibinfo{journal}{\emph{Games and Economic Behavior}}  \bibinfo{volume}{101} (\bibinfo{year}{2017}), \bibinfo{pages}{151--171}.
\newblock


\bibitem[Groves(1973)]%
        {groves1973incentives}
\bibfield{author}{\bibinfo{person}{Theodore Groves}.} \bibinfo{year}{1973}\natexlab{}.
\newblock \showarticletitle{Incentives in teams}.
\newblock \bibinfo{journal}{\emph{Econometrica}} (\bibinfo{year}{1973}), \bibinfo{pages}{617--631}.
\newblock


\bibitem[Hanson(1999)]%
        {hanson1999decision}
\bibfield{author}{\bibinfo{person}{Robin Hanson}.} \bibinfo{year}{1999}\natexlab{}.
\newblock \showarticletitle{Decision markets}.
\newblock \bibinfo{journal}{\emph{IEEE Intelligent Systems}} \bibinfo{volume}{14}, \bibinfo{number}{3} (\bibinfo{year}{1999}), \bibinfo{pages}{16--19}.
\newblock


\bibitem[Hanson(2003)]%
        {hanson2003combinatorial}
\bibfield{author}{\bibinfo{person}{Robin Hanson}.} \bibinfo{year}{2003}\natexlab{}.
\newblock \showarticletitle{Combinatorial information market design}.
\newblock \bibinfo{journal}{\emph{Information Systems Frontiers}} \bibinfo{volume}{5}, \bibinfo{number}{1} (\bibinfo{year}{2003}), \bibinfo{pages}{107--119}.
\newblock


\bibitem[Jackson and Tan(2013)]%
        {jackson2013deliberation}
\bibfield{author}{\bibinfo{person}{Matthew~O. Jackson} {and} \bibinfo{person}{Xu Tan}.} \bibinfo{year}{2013}\natexlab{}.
\newblock \showarticletitle{Deliberation, disclosure of information, and voting}.
\newblock \bibinfo{journal}{\emph{Journal of Economic Theory}} \bibinfo{volume}{148}, \bibinfo{number}{1} (\bibinfo{year}{2013}), \bibinfo{pages}{2--30}.
\newblock
\showISSN{0022-0531}
\urldef\tempurl%
\url{https://doi.org/10.1016/j.jet.2012.12.002}
\showDOI{\tempurl}


\bibitem[Kamenica and Gentzkow(2011)]%
        {kamenica2011bayesian}
\bibfield{author}{\bibinfo{person}{Emir Kamenica} {and} \bibinfo{person}{Matthew Gentzkow}.} \bibinfo{year}{2011}\natexlab{}.
\newblock \showarticletitle{Bayesian persuasion}.
\newblock \bibinfo{journal}{\emph{American Economic Review}} \bibinfo{volume}{101}, \bibinfo{number}{6} (\bibinfo{year}{2011}), \bibinfo{pages}{2590--2615}.
\newblock


\bibitem[Kremer(2002)]%
        {kremer2002information}
\bibfield{author}{\bibinfo{person}{Ilan Kremer}.} \bibinfo{year}{2002}\natexlab{}.
\newblock \showarticletitle{Information Aggregation in Common Value Auctions}.
\newblock \bibinfo{journal}{\emph{Econometrica}} \bibinfo{volume}{70}, \bibinfo{number}{4} (\bibinfo{year}{2002}), \bibinfo{pages}{1675--1682}.
\newblock
\showISSN{00129682, 14680262}


\bibitem[Lalley and Weyl(2018)]%
        {Lalley2018}
\bibfield{author}{\bibinfo{person}{Steven~P Lalley} {and} \bibinfo{person}{E.~Glen Weyl}.} \bibinfo{year}{2018}\natexlab{}.
\newblock \showarticletitle{Quadratic voting: How mechanism design can radicalize democracy}. In \bibinfo{booktitle}{\emph{AEA Papers and Proceedings}}, Vol.~\bibinfo{volume}{108}. American Economic Association, \bibinfo{pages}{33--37}.
\newblock


\bibitem[Lalley and Weyl(2019)]%
        {Lalley2019}
\bibfield{author}{\bibinfo{person}{Steven~P. Lalley} {and} \bibinfo{person}{E.~Glen Weyl}.} \bibinfo{year}{2019}\natexlab{}.
\newblock \bibinfo{title}{{Nash} Equilbria for Quadratic Voting}.
\newblock
\newblock
\showeprint[arxiv]{1409.0264}~[cs.GT]


\bibitem[Lambert et~al\mbox{.}(2015a)]%
        {Lambert2015}
\bibfield{author}{\bibinfo{person}{Nicolas~S Lambert}, \bibinfo{person}{John Langford}, \bibinfo{person}{Jennifer~Wortman Vaughan}, \bibinfo{person}{Yiling Chen}, \bibinfo{person}{Daniel~M Reeves}, \bibinfo{person}{Yoav Shoham}, {and} \bibinfo{person}{David~M Pennock}.} \bibinfo{year}{2015}\natexlab{a}.
\newblock \showarticletitle{An axiomatic characterization of wagering mechanisms}.
\newblock \bibinfo{journal}{\emph{Journal of Economic Theory}}  \bibinfo{volume}{156} (\bibinfo{year}{2015}), \bibinfo{pages}{389--416}.
\newblock


\bibitem[Lambert et~al\mbox{.}(2008)]%
        {lambert2008self}
\bibfield{author}{\bibinfo{person}{Nicolas~S Lambert}, \bibinfo{person}{John Langford}, \bibinfo{person}{Jennifer Wortman}, \bibinfo{person}{Yiling Chen}, \bibinfo{person}{Daniel Reeves}, \bibinfo{person}{Yoav Shoham}, {and} \bibinfo{person}{David~M Penno~k}.} \bibinfo{year}{2008}\natexlab{}.
\newblock \showarticletitle{Self-financed wagering mechanisms for forecasting}. In \bibinfo{booktitle}{\emph{Proceedings of the 9th ACM Conference on Electronic Commerce}} \emph{(\bibinfo{series}{EC})}. \bibinfo{pages}{170--179}.
\newblock


\bibitem[Lambert et~al\mbox{.}(2015b)]%
        {chen2015axiomatic}
\bibfield{author}{\bibinfo{person}{Nicolas~S Lambert}, \bibinfo{person}{John Langford}, \bibinfo{person}{Jennifer Wortman~Vaughan}, \bibinfo{person}{Yiling Chen}, \bibinfo{person}{Daniel Reeves}, \bibinfo{person}{Yoav Shoham}, {and} \bibinfo{person}{David Pennock}.} \bibinfo{year}{2015}\natexlab{b}.
\newblock \showarticletitle{An Axiomatic Characterization of Wagering Mechanisms}.
\newblock \bibinfo{journal}{\emph{Journal of Economic Theory}}  \bibinfo{volume}{156} (\bibinfo{year}{2015}), \bibinfo{pages}{389--–416}.
\newblock


\bibitem[Lucier et~al\mbox{.}(2013)]%
        {lucier2013equilibrium}
\bibfield{author}{\bibinfo{person}{Brendan Lucier}, \bibinfo{person}{Yaron Singer}, \bibinfo{person}{Vasilis Syrgkanis}, {and} \bibinfo{person}{{\'E}va Tardos}.} \bibinfo{year}{2013}\natexlab{}.
\newblock \showarticletitle{Equilibrium in combinatorial public projects}. In \bibinfo{booktitle}{\emph{International Conference on Web and Internet Economics}}. Springer, \bibinfo{pages}{347--360}.
\newblock


\bibitem[Miller et~al\mbox{.}(2005)]%
        {miller2005eliciting}
\bibfield{author}{\bibinfo{person}{Nolan Miller}, \bibinfo{person}{Paul Resnick}, {and} \bibinfo{person}{Richard Zeckhauser}.} \bibinfo{year}{2005}\natexlab{}.
\newblock \showarticletitle{Eliciting informative feedback: the peer-prediction method}.
\newblock \bibinfo{journal}{\emph{Management Science}} \bibinfo{volume}{51}, \bibinfo{number}{9} (\bibinfo{year}{2005}), \bibinfo{pages}{1359--1373}.
\newblock


\bibitem[Oesterheld and Conitzer(2020)]%
        {oesterheld2020decision}
\bibfield{author}{\bibinfo{person}{Caspar Oesterheld} {and} \bibinfo{person}{Vincent Conitzer}.} \bibinfo{year}{2020}\natexlab{}.
\newblock \showarticletitle{Decision Scoring Rules}. In \bibinfo{booktitle}{\emph{International Conference on Web and Internet Economics}} \emph{(\bibinfo{series}{WINE})}. \bibinfo{publisher}{Springer International Publishing}.
\newblock
\showISBNx{978-3-030-64946-3}


\bibitem[Ohseto(2000)]%
        {ohseto2000characterizations}
\bibfield{author}{\bibinfo{person}{Shinji Ohseto}.} \bibinfo{year}{2000}\natexlab{}.
\newblock \showarticletitle{Characterizations of strategy-proof mechanisms for excludable versus nonexcludable public projects}.
\newblock \bibinfo{journal}{\emph{Games and Economic Behavior}} \bibinfo{volume}{32}, \bibinfo{number}{1} (\bibinfo{year}{2000}), \bibinfo{pages}{51--66}.
\newblock


\bibitem[Ostrovsky(2012)]%
        {ostrovsky2012information}
\bibfield{author}{\bibinfo{person}{Michael Ostrovsky}.} \bibinfo{year}{2012}\natexlab{}.
\newblock \showarticletitle{Information aggregation in dynamic markets with strategic traders}.
\newblock \bibinfo{journal}{\emph{Econometrica}} \bibinfo{volume}{80}, \bibinfo{number}{6} (\bibinfo{year}{2012}), \bibinfo{pages}{2595--2647}.
\newblock


\bibitem[Othman and Sandholm(2010)]%
        {othman2010decision}
\bibfield{author}{\bibinfo{person}{Abraham Othman} {and} \bibinfo{person}{Tuomas Sandholm}.} \bibinfo{year}{2010}\natexlab{}.
\newblock \showarticletitle{Decision rules and decision markets}. In \bibinfo{booktitle}{\emph{Proceedings of the 9th International Conference on Autonomous Agents and Multiagent Systems}} \emph{(\bibinfo{series}{AAMAS})}. \bibinfo{pages}{625--632}.
\newblock


\bibitem[Quarfoot et~al\mbox{.}(2017)]%
        {Quarfoot2017}
\bibfield{author}{\bibinfo{person}{David Quarfoot}, \bibinfo{person}{Douglas von Kohorn}, \bibinfo{person}{Kevin Slavin}, \bibinfo{person}{Rory Sutherland}, \bibinfo{person}{David Goldstein}, {and} \bibinfo{person}{Ellen Konar}.} \bibinfo{year}{2017}\natexlab{}.
\newblock \showarticletitle{Quadratic voting in the wild: real people, real votes}.
\newblock \bibinfo{journal}{\emph{Public Choice}}  \bibinfo{volume}{172} (\bibinfo{year}{2017}), \bibinfo{pages}{283--303}.
\newblock


\bibitem[Roughgarden et~al\mbox{.}(2017)]%
        {roughgarden2017price}
\bibfield{author}{\bibinfo{person}{Tim Roughgarden}, \bibinfo{person}{Vasilis Syrgkanis}, {and} \bibinfo{person}{Eva Tardos}.} \bibinfo{year}{2017}\natexlab{}.
\newblock \showarticletitle{The price of anarchy in auctions}.
\newblock \bibinfo{journal}{\emph{Journal of Artificial Intelligence Research}}  \bibinfo{volume}{59} (\bibinfo{year}{2017}), \bibinfo{pages}{59--101}.
\newblock


\bibitem[Savage(1971)]%
        {savage1971elicitation}
\bibfield{author}{\bibinfo{person}{Leonard~J. Savage}.} \bibinfo{year}{1971}\natexlab{}.
\newblock \showarticletitle{Elicitation of personal probabilities and expectations}.
\newblock \bibinfo{journal}{\emph{J. Amer. Statist. Assoc.}} \bibinfo{volume}{66}, \bibinfo{number}{336} (\bibinfo{year}{1971}), \bibinfo{pages}{783--801}.
\newblock


\bibitem[Schnakenberg(2015)]%
        {schnakenberg2015expert}
\bibfield{author}{\bibinfo{person}{Keith~E. Schnakenberg}.} \bibinfo{year}{2015}\natexlab{}.
\newblock \showarticletitle{Expert advice to a voting body}.
\newblock \bibinfo{journal}{\emph{Journal of Economic Theory}}  \bibinfo{volume}{160} (\bibinfo{year}{2015}), \bibinfo{pages}{102--113}.
\newblock
\showISSN{0022-0531}
\urldef\tempurl%
\url{https://doi.org/10.1016/j.jet.2015.08.005}
\showDOI{\tempurl}


\bibitem[Schoenebeck and Tao(2021)]%
        {schoenebeck2021wisdom}
\bibfield{author}{\bibinfo{person}{Grant Schoenebeck} {and} \bibinfo{person}{Biaoshuai Tao}.} \bibinfo{year}{2021}\natexlab{}.
\newblock \showarticletitle{Wisdom of the crowd voting: Truthful aggregation of voter information and preferences}.
\newblock \bibinfo{journal}{\emph{Advances in Neural Information Processing Systems}}  \bibinfo{volume}{34} (\bibinfo{year}{2021}), \bibinfo{pages}{1872--1883}.
\newblock


\bibitem[Vickrey(1961)]%
        {vickrey1961counterspeculation}
\bibfield{author}{\bibinfo{person}{William Vickrey}.} \bibinfo{year}{1961}\natexlab{}.
\newblock \showarticletitle{Counterspeculation, auctions, and competitive sealed tenders}.
\newblock \bibinfo{journal}{\emph{The Journal of Finance}} \bibinfo{volume}{16}, \bibinfo{number}{1} (\bibinfo{year}{1961}), \bibinfo{pages}{8--37}.
\newblock


\bibitem[Weyl(2013)]%
        {Weyl2013OriginalQV}
\bibfield{author}{\bibinfo{person}{E.~Glen Weyl}.} \bibinfo{year}{2013}\natexlab{}.
\newblock \showarticletitle{Quadratic Vote Buying}.
\newblock \bibinfo{journal}{\emph{SSRN Electronic Journal}} (\bibinfo{date}{04} \bibinfo{year}{2013}).
\newblock
\urldef\tempurl%
\url{https://doi.org/10.2139/ssrn.2003531}
\showDOI{\tempurl}


\bibitem[Weyl(2017)]%
        {Weyl2017robustness}
\bibfield{author}{\bibinfo{person}{E.~Glen Weyl}.} \bibinfo{year}{2017}\natexlab{}.
\newblock \showarticletitle{The robustness of quadratic voting}.
\newblock \bibinfo{journal}{\emph{Public choice}} \bibinfo{volume}{172}, \bibinfo{number}{1-2} (\bibinfo{year}{2017}), \bibinfo{pages}{75--107}.
\newblock


\end{thebibliography}

\break

\appendix
\section{Quadratic Transfer Mechanism} \label{app:qtm}
This section gives proofs from Section \ref{sec:qtm} along with some additional results and discussion.

\subsection{Equilibrium Characterization} \label{app:equilibrium}
In this section, we characterize conditions under which a pure-strategy equilibrium exists in the QTM.
Our results build on some techniques of ~\citet{Eguia19}, but also introduce new techniques and new structural features of the mechanism, such as the fact that votes across alternatives always sum to zero in equilibrium.

To show existence, we take the following steps.
First (Lemma \ref{lem:strictly-dominated-strategy}), we show that for each player, it is strictly dominated to play any $\vect{a}^i$ lying outside a certain compact set.
It follows that all equilibria are ``as if'' the game were restricted to a compact strategy space (Lemma \ref{lem:compact-to-unbounded-equilibrium}).
From there, we can show that if the parameter $c$ is large enough relative to the maximum agent preference, then utilities are concave, and apply an existence theorem (Proposition \ref{prop:equilibrium-existence}).
A similar argument appears in \citet{Eguia19}, but in our case we give a constructive bound on $c$ that appears quite tight.
We also note a connection in the analysis to exponential family distributions.
Finally, we derive the first-order conditions for pure-strategy Nash equilibrium, and show as a byproduct that each agent's votes sum to zero.

To begin, recall that a strategy is \emph{strictly dominated} if there exists another strategy that always obtains strictly higher utility.
\begin{lemma} \label{lem:strictly-dominated-strategy}
	For all agents $i$, if the pure strategy $\vect{a}^i$ has $|a_k^i| >  \sqrt{ \frac{max_{l} v_l^i}{c} }$ for any $k$, then $\vect{a}^i$ is a strictly dominated strategy.
\end{lemma}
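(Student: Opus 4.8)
The plan is to show that \emph{any} such strategy is strictly dominated by the all-zeros vote $\vect{a}^i = \vect{0}$. The crucial structural observation is that in agent $i$'s utility \eqref{eq:utility}, the redistribution term $\frac{c}{n-1}\sum_{j\neq i}\sum_k (a_k^j)^2$ does not depend on $i$'s own action and therefore cancels in any comparison between two strategies of $i$; and the ``benefit'' term $\sum_k p_k v_k^i$ is a convex combination of the nonnegative values $v_1^i,\dots,v_m^i$, so it always lies in $[0,\max_l v_l^i]$ no matter how anyone votes. Hence the only channel through which $i$'s own choice can move $i$'s utility by more than $\max_l v_l^i$ is the quadratic cost $-c\sum_k (a_k^i)^2$, and a single coordinate violating the stated bound already makes that cost too large to be worthwhile.

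Concretely, I would fix an arbitrary pure-strategy profile $\vect{a}^{-i}$ of the opponents and let $\vect{p}$ (resp.\ $\vect{p}^0$) be the softmax output when $i$ plays $\vect{a}^i$ (resp.\ $\vect{0}$). Then, after the redistribution terms cancel,
\[
  u^i(\vect{0},\vect{a}^{-i}) - u^i(\vect{a}^i,\vect{a}^{-i}) = \sum_k (p_k^0 - p_k) v_k^i + c \sum_k (a_k^i)^2 .
\]
I would lower-bound the first sum by $-\max_l v_l^i$ (using $\sum_k p_k^0 v_k^i \geq 0$ and $\sum_k p_k v_k^i \leq \max_l v_l^i$), and lower-bound the second sum by $c (a_k^i)^2$ for the offending coordinate $k$, which by hypothesis is strictly greater than $c \cdot \tfrac{\max_l v_l^i}{c} = \max_l v_l^i$. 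Adding the two bounds gives a strictly positive difference, so $\vect{0}$ strictly dominates $\vect{a}^i$. The one edge case to note in passing is $\max_l v_l^i = 0$: there the hypothesis still forces $a_k^i \neq 0$, so $c\sum_k (a_k^i)^2 > 0 = \max_l v_l^i$ and the argument is unchanged.

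I do not expect a real obstacle here; the argument is a one-line estimate once the right comparison strategy is identified. The only point requiring care is preserving \emph{strictness}: it comes entirely from the strict inequality in the hypothesis, since the bound on the benefit term $\sum_k (p_k^0 - p_k) v_k^i \geq -\max_l v_l^i$ is only weak. It is also worth emphasizing that the dominating strategy $\vect{0}$ is the same for every opponent profile $\vect{a}^{-i}$, which is precisely what strict domination demands and what Lemma~\ref{lem:compact-to-unbounded-equilibrium} uses downstream to restrict attention to the compact box $\{\,\lvert a_k^i\rvert \leq \sqrt{\max_l v_l^i/c}\ \forall k\,\}$.
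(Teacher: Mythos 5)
Your proposal is correct and is essentially the paper's own argument: both compare against the all-zeros strategy, bound the benefit term $\sum_k p_k v_k^i$ between $0$ and $\max_l v_l^i$, note the redistribution term is unaffected by $i$'s own action, and derive strictness from the quadratic cost exceeding $\max_l v_l^i$ via the offending coordinate. The only cosmetic difference is that you write the comparison as a single difference of utilities while the paper bounds the two utilities separately against the redistribution term.
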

\begin{proof}
	Fix all other agents' strategies as $\vect{a}^{-i}$. Denote the expected utility $u^i$ of pure strategy $\vect{a}^i$ given fixed strategies $\vect{a}^{-i}$ as $u^i(\vect{a}^i; \vect{a}^{-i})$, and the softmax probability $p_k$ over strategy $a^i$ and fixed strategies $\vect{a}^{-i}$ as $p_k(\vect{a}^i; \vect{a}^{-i})$.
	
	Consider the strategy $\hat{a}_k^i = 0 \; \forall k \in \alternativeset$. Then
	\begin{align}
		u^i(\vect{\hat{a}}^i; \vect{a}^{-i}) &= \sum_{k=1}^m p_k(\vect{\hat{a}}^i; \vect{a}^{-i}) v_k^i - c \sum_{k=1}^m 0^2 + \frac{1}{n-1} \sum\limits_{\substack{j=1 \\ j \neq i}}^n \sum_{k = 1}^m (a_k^j)^2 \nonumber \\
		&\geq \frac{1}{n-1} \sum\limits_{\substack{j=1 \\ j \neq i}}^n \sum_{k = 1}^m (a_k^j)^2. \label{eq:dominating-strategy}
	\end{align}
	Now, take a different pure strategy $\vect{\tilde{a}}^i$ for voter $i$ where $|\tilde{a}_k^i| > \sqrt{ \frac{max_{l} v_l^i}{c} }$ for some $k \in \alternativeset$. 
	
	Since $(\tilde{a}_k^i)^2$ is nonnegative, we must have $\sum_{k=1}^m (\tilde{a}_k^i)^2 > \frac{1}{c} (\max_l v_l^i)$. Thus
	\begin{align}
		 u^i(\vect{\tilde{a}}^i; \vect{a}^{-i}) &= \sum_{k=1}^m p_k(\vect{\tilde{a}}^i; \vect{a}^{-i}) v_k^i - c \sum_{k=1}^m (\vect{\tilde{a}}_k^i)^2 + \frac{1}{n-1} \sum\limits_{\substack{j=1 \\ j \neq i}}^n \sum_{k = 1}^m (a_k^j)^2 \nonumber \\
		&< \max_l v_l^i - \max_l v_l^i + \frac{1}{n-1} \sum\limits_{\substack{j=1 \\ j \neq i}}^n \sum_{k = 1}^m (a_k^j)^2 \nonumber \\
		&= \frac{1}{n-1} \sum\limits_{\substack{j=1 \\ j \neq i}}^n \sum_{k = 1}^m (a_k^j)^2. \label{eq:dominated-strategy}
	\end{align}
	Combining Inequalities \ref{eq:dominating-strategy} and \ref{eq:dominated-strategy}, for any strategy profile $\vect{a}^{-i}$ of voters other than $i$, $u^i(\vect{\tilde{a}}^i; \vect{a}^{-i}) < \frac{1}{n-1} \sum_{j \neq i} \sum_{k = 1}^m (a_k^j)^2 < u^i(\vect{\hat{a}}^i; \vect{a}^{-i})$.
	By definition, then, it is a strictly dominated strategy for any agent $i$ to play $\vect{\tilde{a}}^i$.
	
\end{proof}

\begin{lemma} \label{lem:compact-to-unbounded-equilibrium}
	Fix a set of agents with types $\vect{v}$.
	Let QTM' be the QTM where each agent is restricted to the compact strategy space $\Bigl[ -\sqrt{ \frac{\max_{i,k} v_k^i}{c} }, \sqrt{ \frac{\max_{i,k} v_k^i}{c} } \Bigr]^m$.
	A pure-strategy Nash equilibrium exists for QTM' if and only if that same pure-strategy equilibrium exists for QTM.
\end{lemma}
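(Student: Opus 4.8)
The plan is to derive the equivalence entirely from Lemma \ref{lem:strictly-dominated-strategy}, with no additional analysis. The single fact I would extract from (the proof of) that lemma is: any vote vector $\vect{a}^i$ with some coordinate of magnitude exceeding $\sqrt{\max_{i,k} v_k^i / c}$ is strictly dominated, for agent $i$, by the all-zeros vote $\vect{0}$, and this dominating strategy lies inside the restricted box. This is consistent with the lemma as stated, since its per-agent threshold satisfies $\sqrt{\max_\ell v_\ell^i / c} \le \sqrt{\max_{i,k} v_k^i / c}$, so ``lying outside the box'' implies ``strictly dominated.'' Given this, I would argue the two implications separately; both are elementary.

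For the forward direction, suppose $\vect{a}$ is a pure-strategy Nash equilibrium of the unrestricted QTM. A strictly dominated strategy is never a best response, so each $\vect{a}^i$ must lie in the box, hence $\vect{a}$ is a feasible profile for QTM$'$. Since each agent's strategy set in QTM$'$ is a subset of the corresponding set in QTM, and $\vect{a}$ admits no profitable unilateral deviation in the larger game, it admits none in QTM$'$ either; so $\vect{a}$ is a pure-strategy Nash equilibrium of QTM$'$.

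For the reverse direction, suppose $\vect{a}$ is a pure-strategy Nash equilibrium of QTM$'$. Each $\vect{a}^i$ lies in the box and so is feasible in QTM. Suppose toward a contradiction that some agent $i$ has a profitable deviation $\vect{b}^i \in \R^m$ against $\vect{a}^{-i}$ in the unrestricted QTM, i.e. $u^i(\vect{b}^i; \vect{a}^{-i}) > u^i(\vect{a}^i; \vect{a}^{-i})$. If $\vect{b}^i$ lies in the box, this directly contradicts $\vect{a}$ being an equilibrium of QTM$'$. Otherwise $\vect{b}^i$ lies outside the box, so by Lemma \ref{lem:strictly-dominated-strategy} it is strictly dominated by $\vect{0}$, giving $u^i(\vect{0}; \vect{a}^{-i}) > u^i(\vect{b}^i; \vect{a}^{-i}) > u^i(\vect{a}^i; \vect{a}^{-i})$; since $\vect{0}$ is in the box, this exhibits a profitable deviation available within QTM$'$, again a contradiction. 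Hence $\vect{a}$ admits no profitable deviation in QTM and is a pure-strategy Nash equilibrium there.

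I do not expect a genuine obstacle. The only point requiring care is that the dominating strategy be feasible in the restricted game — which holds because it is the all-zeros vote — together with the bookkeeping that the box radius, defined via the global maximum $\max_{i,k} v_k^i$, dominates the per-agent domination threshold, defined via $\max_\ell v_\ell^i$; this is immediate from monotonicity of the square root. Everything else is the standard observation that shrinking the strategy space can only remove candidate deviations, while strictly dominated strategies may be disregarded when checking the equilibrium condition.
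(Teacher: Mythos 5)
Your proof is correct and follows essentially the same route as the paper's: both directions rest on Lemma \ref{lem:strictly-dominated-strategy}, using that any vote vector outside the box is strictly dominated by the all-zeros vote (which lies inside the box), so equilibria of the restricted and unrestricted games coincide. Your explicit contradiction argument for the QTM$'\Rightarrow$QTM direction, and your note that the global box radius dominates the per-agent threshold, just make precise steps the paper leaves implicit.
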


\begin{proof}
	Let $\mathcal{A} := \Bigl[ -\sqrt{ \frac{\max_{i,k} v_k^i}{c} }, \sqrt{ \frac{\max_{i,k} v_k^i}{c} } \Bigr]^m$.
	Suppose $\vect{a}_* = (\vect{a}^1_*, \vect{a}^2_*, ... \vect{a}^n_*)$ is a pure-strategy equilibrium for QTM'.
	Then for each $i$, $\vect{a}_*^i$ is a best response among $\mathcal{A}$ to $\vect{a}_*^{-i}$. 
	By Lemma \ref{lem:strictly-dominated-strategy}, no strategy $\vect{a}^i \not\in \mathcal{A}$ can be a best response to $\vect{a}_*^{-i}$, because $\vect{a}^i$ is strictly dominated by some strategy in $\mathcal{A}$.
	So $\vect{a}_*$ remains an equilibrium for QTM.

	Conversely, let $\vect{a}_*$ be an equilibrium for QTM.
	We must have $\vect{a}_*^i \in \mathcal{A}$ for all $i$, as strictly dominated strategies are never best responses.
	So $\vect{a}_*$ is an equilibrium for QTM', as the strategies all lie in $\mathcal{A}$ and are all best responses among all strategies in $\mathcal{A}$.
\end{proof}

\begin{lemma} \label{lem:concave-utils}
	In the QTM on agent value profile $\vect{v}$, if the mechanism chooses $c \geq \tfrac{1}{2} \max_{i,k} v_k^i$, then every agent's utility is strictly concave as a function of their vote $\vec{a}^i$, for any fixed strategy of their opponents.
\end{lemma}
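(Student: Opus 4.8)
The plan is to show that, with the opponents' votes held fixed, the Hessian of $u^i$ with respect to $\vec{a}^i$ is negative definite at every point of $\R^m$; since $\R^m$ is convex, this yields strict concavity. Fix $\vec{a}^{-i}$, so that $A_k = a_k^i + \sum_{j \neq i} a_k^j$ depends on $\vec{a}^i$ only through $a_k^i$, and write $p_k = e^{A_k}/\sum_l e^{A_l}$. By \eqref{eq:utility}, as a function of $\vec{a}^i$ the utility is $u^i(\vec{a}^i) = f(\vec{a}^i) - c \| \vec{a}^i \|_2^2 + (\text{const})$, where $f(\vec{a}^i) := \sum_k p_k v_k^i = \E_{k \sim p} v_k^i$ and the constant is the redistribution term. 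The Hessian of $-c\|\vec{a}^i\|_2^2$ is $-2cI$, so it suffices to prove that $\vec{x}^\top \nabla^2 f(\vec{a}^i)\, \vec{x} < 2c \|\vec{x}\|_2^2$ for every $\vec{x} \neq 0$. (The bound will be pointwise in $\vec{a}^{-i}$, so it also covers randomizing opponents by averaging.)

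The crux is a clean expression for $\nabla^2 f$. Using the softmax derivative $\partial_{a_j^i} p_l = p_l(\delta_{jl} - p_j)$, a first differentiation gives $\partial_{a_j^i} f = p_j (v_j^i - \bar v)$ with $\bar v := \E_{k\sim p} v_k^i$, and a second differentiation (also using $\partial_{a_k^i} \bar v = p_k(v_k^i - \bar v)$) gives, after the cross terms collapse because $\sum_k p_k (v_k^i - \bar v) = 0$,
\[
	\vec{x}^\top \nabla^2 f(\vec{a}^i)\, \vec{x} \;=\; \sum_k p_k (v_k^i - \bar v)(x_k - \bar x)^2 \;=\; \E_{k \sim p}\!\left[ (v_k^i - \bar v)(x_k - \bar x)^2 \right],
\]
where $\bar x := \E_{k\sim p} x_k$. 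This identity --- in effect, that the relevant third-order behavior of the softmax normalizer $\log\sum_l e^{A_l}$ contracts against $\vec{v}^i$ and $\vec{x}$ in exactly this way --- is the one place where care is needed; I would obtain it by the direct two-step computation above and sanity-check it in the two-alternative case, where $f$ reduces to a sigmoid and the formula collapses to $\sigma''$ times a square.

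Everything after the identity is elementary. Pulling $|v_k^i - \bar v|$ out of the sum and using $(x_k - \bar x)^2 \ge 0$,
\[
	\vec{x}^\top \nabla^2 f(\vec{a}^i)\, \vec{x} \;\le\; \Big( \max_k |v_k^i - \bar v| \Big) \sum_k p_k (x_k - \bar x)^2 \;=\; \Big( \max_k |v_k^i - \bar v| \Big)\, \mathrm{Var}_{k \sim p}(x_k).
\]
Since $v_k^i \in [0,\, \max_{i,k} v_k^i]$ for all $k$ and $\bar v$ is a convex combination of the $v_k^i$, we have $\max_k |v_k^i - \bar v| \le \max_{i,k} v_k^i \le 2c$ by the hypothesis on $c$. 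And since the softmax puts strictly positive weight on every one of the $m \ge 2$ alternatives, $p_k < 1$ for all $k$, so $\mathrm{Var}_{k\sim p}(x_k) \le \sum_k p_k x_k^2 < \sum_k x_k^2 = \|\vec{x}\|_2^2$ whenever $\vec{x} \neq 0$. Combining these (with the degenerate subcases $\max_k |v_k^i - \bar v| = 0$ or $\mathrm{Var}_{k\sim p}(x_k) = 0$, where the quadratic form equals $0 < 2c\|\vec{x}\|_2^2$ since $c > 0$, handled directly), we get $\vec{x}^\top \nabla^2 f(\vec{a}^i)\, \vec{x} < 2c \|\vec{x}\|_2^2$ for all $\vec{x} \neq 0$, so $\nabla^2 u^i(\vec{a}^i)$ is negative definite and $u^i$ is strictly concave in $\vec{a}^i$.

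I expect the Hessian identity to be the main obstacle --- not because it is deep, but because the bookkeeping must be exactly right to land the tight coefficient $\tfrac12$; a cruder triangle-inequality bound on $\vec{x}^\top \nabla^2 f\, \vec{x}$ gives only something like $c \ge \tfrac32 \max_{i,k} v_k^i$. The same computation in fact yields the sharper per-agent condition $c \ge \tfrac12(\max_k v_k^i - \min_k v_k^i)$, consistent with the paper's claim that the bound on $c$ is essentially tight.
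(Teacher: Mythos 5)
Your proof is correct, and after the initial Hessian computation it diverges from the paper's argument in a worthwhile way. The paper also writes $\nabla^2 u^i = -2cI + (\text{softmax term})$, but it then works entrywise: it computes each $H^i_{kk}$ and $H^i_{kl}$ explicitly and bounds the spectral radius of the perturbation by the maximum absolute row sum, obtaining row sums at most $\tfrac{p_k \max_h v_h^i}{c} < 2$. You instead evaluate the quadratic form in closed form,
\[
\vec{x}^\top \nabla^2 f(\vec{a}^i)\,\vec{x} \;=\; \sum_k p_k\,(v_k^i - \bar v)(x_k - \bar x)^2 ,
\]
which I have verified (expanding $\partial_k\partial_j f = p_j(\delta_{jk}-p_k)(v_j^i-\bar v) - p_jp_k(v_k^i-\bar v)$ and using $\sum_k p_k(v_k^i-\bar v)=0$ to absorb the $\bar x^2$ term), and then apply H\"older plus the variance bound $\mathrm{Var}_{k\sim p}(x_k) < \|\vec{x}\|_2^2$. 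Both routes land the same constant $c \geq \tfrac12 \max_{i,k} v_k^i$, but your identity is cleaner and buys two things the paper's row-sum bound does not make visible: the sharper per-agent sufficient condition $c \geq \tfrac12(\max_k v_k^i - \min_k v_k^i)$ in terms of the range of values, and a transparent probabilistic meaning (the quadratic form is a third-cumulant-type expectation, which connects to the paper's own exponential-family remark). The one place requiring care is exactly where you flagged it --- the collapse of the cross terms in the second differentiation --- and your two-alternative sanity check ($p_1p_2(p_2-p_1) = \sigma''$) is consistent.
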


\begin{proof}[Lemma \ref{lem:concave-utils}]
	Note that $\hat{u}^i = u^i / c$ has the same concavity as $u^i$. 
	Consider $H^i$, the Hessian matrix of $\hat{u}^i$.
	We begin by computing expressions for diagonal entries $H_{kk}^i$ and non-diagonal entries $H_{kl}$, where $k \neq l$.
	\begin{align}
		\frac{\partial \hat{u}^i}{\partial a_k^i}                           & = \sum_{h=1}^{m} \frac{\partial}{\partial a_k^i}p_h \frac{v_h^i}{c} - 2a_k^i, \text{so}                \nonumber    \\
		H_{kk}^i = \frac{\partial^2 \hat{u}^i}{\partial a_k^i \partial a_k^i} & = \sum_{h=1}^{m} \frac{\partial^2}{(\partial a_k^i)^2}p_h \frac{v_h^i}{c} - 2 \label{eq:Uideriv1}        \\
		H_{kl}^i = \frac{\partial^2 \hat{u}^i}{\partial a_k^i \partial a_l^i} & = \sum_{h=1}^{m} \frac{\partial^2}{\partial a_k^i \partial a_l^i}p_h \frac{v_h^i}{c}. \label{eq:Uideriv2} 
	\end{align}
	The first derivatives for each probability $p_k$ are listed below:
	\begin{align*}
		\frac{\partial p_k}{\partial a_k^i} & = \frac{e^{A_k} \sum_{h=1}^{m} e^{A_h} - e^{2A_k}}{\Bigl(\sum_{h=1}^{m} e^{A_h}\Bigr)^2} = p_k - p_k^2 \\
		\frac{\partial p_l}{\partial a_k^i} & = \frac{- e^{A_l}e^{A_k}}{\Bigl(\sum_{h=1}^{m} e^{A_h}\Bigr)^2} = -p_l p_k  .                           
	\end{align*}
	We extend to the second derivatives as well; given $k \neq l \neq h$,
	\begin{align*}
		\frac{\partial^2 p_k}{\partial a_k^i \partial a_k^i} & = p_k (1 - 3p_k + 2p_k^2) = p_k (2p_k - 1) (p_k - 1) \\
		\frac{\partial^2 p_k}{\partial a_k^i \partial a_l^i} & = p_kp_l(2p_k - 1)                                   \\
		\frac{\partial^2 p_k}{\partial a_l^i \partial a_h^i} & = 2p_lp_hp_k                                         \\
		\frac{\partial^2 p_k}{\partial a_l^i \partial a_l^i} & = p_kp_l(2p_l - 1).
	\end{align*}
	Denote $\expvi = \sum_{k=1}^m p_k v_k^i$, the expected value of the mechanism for agent $i$. Plugging the above values into Equations (\ref{eq:Uideriv1}) and (\ref{eq:Uideriv2}), we have
	\begin{align*}
		H_{kk}^i & = \sum_{h=1}^{m} \frac{\partial^2}{(\partial a_k^i)^2}p_h \frac{v_h^i}{c} - 2                       \\
			& = \frac{p_k}{c} \Bigl[ (2p_k - 1) (p_k - 1) v_k^i + (2p_k - 1) \sum_{h \neq k} p_h v_h^i \Bigr] - 2 \\
			& = \frac{p_k}{c} (2p_k - 1) (\expvi - v_k^i) - 2,                                          
	\end{align*}
	\begin{align*}
		H_{kl}^i & = \sum_{h=1}^{m} \frac{\partial^2}{\partial a_k^i \partial a_l^i}p_h \frac{v_h^i}{c}                                    \\
			& = \frac{1}{c} \Bigl[ p_k p_l (2p_k - 1) v_k^i + p_k p_l (2p_l - 1)v_l^i + \sum_{h \neq k, l} 2 p_l p_h p_k v_h^i \Bigr] \\
			& = \frac{p_k p_l}{c} (2\expvi - v_k^i - v_l^i).                                                                    
	\end{align*}
	We consider the matrix $B^i$ such that $H^i = -2I + B^i$.
	Then the inequality $v^T H^i v < 0$ corresponds to $v^T B^i v < 2$: so to prove $H^i$ is negative definite, we just need to bound the spectral radius of $B^i$ by 2. 

	As an interesting side note, the sum over column $k$ in $B^i$ is
	\begin{align*}
		\biggl( \sum_{l \neq k} \frac{\partial^2 \hat{u}^i}{\partial a_k^i \partial a_l^i} \biggr) +  \frac{\partial^2 \hat{u}^i}{\partial a_k^i \partial a_k^i} + 2 & = \biggl( \sum_{l \neq k} \frac{p_k p_l}{c} (2\expvi - v_k^i - v_l^i) \biggr) + \frac{p_k}{c} (2p_k - 1) ( \expvi - v_k^i) = 0. 
	\end{align*}

	Now, the spectral radius of a matrix is bounded by the maximum sum of absolute value of entries in any row (in our case, $\sum_{l=1}^m |B^i_{kl}|$).
	Note that $|B_{kk}^i| = |H_{kk}^i + 2| \leq \tfrac{p_k}{c}(2p_k - 1) \max_k{ v^i_k}$ and  $|B_{kl}| = |B_{lk}^i| = |H_{kl}^i| \leq (2 p_k p_l \max_k{ v^i_k}) / c$.
	So
	\begin{align*}
		\sum_{l=1}^m |B_{kl}^i| & \leq \frac{2 p_k  \max_k{ v^i_k}}{c} \sum_{l \neq k} p_l + p_k (2p_k - 1) \frac{ \max_k{ v^i_k}}{c} \\
						& = \frac{ \max_h{ v^i_h}}{c}( 2p_k (1 - p_k) + p_k (2p_k - 1))                               \\
						& = \frac{p_k  \max_h{ v^i_h}}{c} < \frac{\max_h{ v^i_h}}{c}.            
	\end{align*}
	Thus, for $c \geq \frac{\max_h{v^i_h}}{2}$, the maximum sum of the absolute value of entries in any row of $B^i$ is strictly less than 2; the spectral radius of $B^i$ is strictly less than 2; and $H^i$ is negative definite for this value of $c$, meaning that $u^i$ is strictly concave.
\end{proof}

Now, we have the tools to prove Proposition \ref{prop:equilibrium-existence}, that for $c \geq \tfrac{1}{2} \max_{i,k} v_k^i$ a pure-strategy Nash equilibrium exists:

\begin{proposition} \label{prop:equilibrium-existence}
	In the QTM on agent value profile $\vect{v}$, if the mechanism chooses $c \geq \tfrac{1}{2} \max_{i,k} v_k^i$, then a pure-strategy Nash equilibrium exists.
\end{proposition}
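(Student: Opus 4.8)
The plan is to assemble the three ingredients the section has already set up: a reduction to a compact strategy space, strict concavity of each agent's utility in their own action, and a standard fixed-point existence theorem. First I would invoke Lemma~\ref{lem:strictly-dominated-strategy} to replace each agent's strategy space $\R^m$ by the compact set $K$ of votes that survive strict dominance (checking, or arranging, that $K$ is convex), obtaining a truncated game $\Gamma_K$ with action profiles in $K^n$. The payoff functions in \eqref{eq:utility} are continuous on all of $\R^{mn}$ — the softmax weights $p_k$ are smooth in $\vect{a}$, and the remaining terms are polynomials — hence continuous on the compact product $K^n$.

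Next I would apply Lemma~\ref{lem:concave-utils}: since $c \geq \tfrac{1}{2}\max_{i,k} v_k^i$, each $u^i(\cdot,\vect{a}^{-i})$ is strictly concave on $\R^m$, and therefore strictly (in particular quasi-) concave on $K$ for every fixed opponent profile $\vect{a}^{-i}$. Thus $\Gamma_K$ is a game with finitely many players, each with a nonempty compact convex action set and a continuous payoff that is quasi-concave in that player's own action. The Debreu--Fan--Glicksberg theorem (equivalently, Kakutani's fixed-point theorem applied to the best-response correspondence, which here is even single-valued by strict concavity) then yields a pure-strategy Nash equilibrium $\vect{a}^\star$ of $\Gamma_K$.

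Finally I would transfer $\vect{a}^\star$ back to the original QTM via Lemma~\ref{lem:compact-to-unbounded-equilibrium}: no agent can profitably deviate within $K$ by the previous step, and no agent can profitably deviate to a vote outside $K$ because such votes are strictly dominated; hence $\vect{a}^\star$ is a pure-strategy Nash equilibrium of the unrestricted game, which is exactly the claim.

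I expect the main obstacle to be bookkeeping rather than a deep step. I need the dominated-strategy region $K$ provided by Lemma~\ref{lem:strictly-dominated-strategy} to be convex (or to replace it by a convex bounding set while verifying this does not reintroduce profitable deviations), so that the convexity hypothesis of the existence theorem is met; and I need the notion of ``strictly dominated'' there to be strong enough — domination by a strategy inside $K$, uniformly over opponents' play — that an equilibrium of $\Gamma_K$ genuinely remains an equilibrium of the full game. Once those points are pinned down, the existence theorem applies essentially verbatim.
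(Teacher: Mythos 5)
Your proposal is correct and follows essentially the same route the paper takes: restrict to the compact box of undominated strategies (Lemma~\ref{lem:strictly-dominated-strategy}), note it is convex and that utilities are continuous and strictly concave there (Lemma~\ref{lem:concave-utils}), apply Debreu--Fan--Glicksberg, and transfer the equilibrium back to $\R^m$ via Lemma~\ref{lem:compact-to-unbounded-equilibrium}. The bookkeeping concerns you flag are already resolved in the paper's setup, since the restricted strategy space is a hypercube and the dominating strategy is the all-zeros vote, which lies inside it.
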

\begin{proof}[Proposition \ref{prop:equilibrium-existence}]
	By Lemma \ref{lem:concave-utils}, if the mechanism chooses $c \geq \tfrac{1}{2} \max_{i,k} v_k^i$ then agent utilities are strictly concave as a function of their votes. 
	Since the strategy spaces of each agent are non-empty, compact, convex subsets of a Euclidean space; utilities are continuous; and utilities are strictly concave, it follows that a pure strategy Nash equilibrium exists \citep{Glicksberg52,Debreu52}.
\end{proof}

\begin{remark}
	The above analysis can be reinterpreted in terms of exponential family distributions.
	Specifically, the mechanism generates probabilities representing which one of $m$ events $X_1, X_2, \ldots, X_m$ will occur, and thus corresponds to a multinomial distribution with parameter $\vect{p}$.
	An exponential family distribution's $n$th cumulant can be calculated directly as the $n$th derivative of the cumulant function $A$.
	One can verify that the derivative of $A$ with respect to the $k$th parameter corresponds to $p_k$, the mean of the random variable $X_k$. 
	It follows that the first partial derivatives of $p$ correspond to the variance and covariance, and the second partial derivatives correspond to the third cumulant. 
\end{remark}

There may be many scenarios where a pure strategy equilibrium exists even for much lower values of $c$.
For example, we conjecture this to be the case when agent values are ``well-aligned'' with each other, without much disagreement.

We now state the important first-order conditions that characterize equilibria in terms of the mechanism's expected individual and aggregate welfare, $\expvi = \sum_k p_k v_k^i$ and $\expv = \sum_k p_k V_k$ respectively.
This result also appears throughout \citet{Eguia19}, although we add some observations that are useful to us later in constructing equilibria as part of our mechanisms.

\begin{lemma} \label{lem:foc}
  In any pure-strategy Nash equilibrium of the QTM with values $\vect{v}$, the votes $\vect{a}$ satisfy
  \begin{align}
    a_k^i &= \frac{p_k}{2c} \left(v_k^i - \sum_{\ell} p_{\ell} v_{\ell}^i \right)  & \text{for all $i$,}  \label{eq:foc} \\
    A_k   &= \frac{p_k}{2c} \left(V_k - \sum_{\ell} p_{\ell} V_{\ell} \right) .    \label{eq:aggregate-foc}
  \end{align}
  Furthermore, if the mechanism sets $c \geq \tfrac{1}{2}\max_{i,k} v_k^i$, then $\vect{a}$ is a pure-strategy Nash equilibrium if and only if (\ref{eq:foc}) and (\ref{eq:aggregate-foc}) are satisfied.
	
  Furthermore, for any vote totals $\vect{V}$, if $\vect{A}$ is any aggregate vote vector that satisfies (\ref{eq:aggregate-foc}), then for any $\vect{v}$ summing to $\vect{V}$ there exists a corresponding $\vect{a}$ summing to $\vect{A}$ satisfying (\ref{eq:foc}).
\end{lemma}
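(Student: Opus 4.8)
The plan is to obtain (\ref{eq:foc}) by setting the gradient of each agent's utility to zero, read off (\ref{eq:aggregate-foc}) by summation, and then invoke the strict concavity of Lemma \ref{lem:concave-utils} for the ``if and only if'' direction. First I would differentiate $u^i$ from (\ref{eq:utility}) with respect to $a_k^i$. Only the softmax term depends on agent $i$'s vote, and $A_\ell = \sum_j a_\ell^j$ depends on $a_k^i$ solely through the summand with $\ell = k$, so the standard softmax identity $\partial p_\ell / \partial a_k^i = p_\ell(\mathbbm{1}\{\ell = k\} - p_k)$ applies. The value term then contributes $\sum_\ell v_\ell^i\, p_\ell\big(\mathbbm{1}\{\ell = k\} - p_k\big) = p_k\big(v_k^i - \sum_\ell p_\ell v_\ell^i\big)$, the quadratic cost contributes $-2c\,a_k^i$, and the redistribution term does not involve $a_k^i$. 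Setting the derivative to zero gives exactly (\ref{eq:foc}), and summing over $i$ (using $\sum_i v_k^i = V_k$) gives (\ref{eq:aggregate-foc}). Since each agent's strategy set is all of $\R^m$, which has empty boundary, any best response is necessarily a stationary point, so (\ref{eq:foc}) holds for every $i$ in any pure-strategy Nash equilibrium; no concavity is needed here.

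For the converse, suppose $c \geq \tfrac12 \max_{i,k} v_k^i$. By Lemma \ref{lem:concave-utils}, $u^i$ is strictly concave in $\vect{a}^i$ for fixed opponent votes, so a stationary point is the unique global best response. Hence $\vect{a}^i$ is a best response to $\vect{a}^{-i}$ exactly when it satisfies (\ref{eq:foc}), and a profile is a pure-strategy Nash equilibrium exactly when (\ref{eq:foc}) holds for all $i$; (\ref{eq:aggregate-foc}) is then automatic by summation, so the two displayed conditions together characterize equilibrium.

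For the final claim, fix any vote totals $\vect{V}$ and an aggregate vector $\vect{A}$ satisfying (\ref{eq:aggregate-foc}), and let $\vect{p}$ be the softmax of $\vect{A}$. Given any profile $\vect{v}^1,\dots,\vect{v}^n$ with $\sum_i \vect{v}^i = \vect{V}$, simply \emph{define} $a_k^i \coloneqq \frac{p_k}{2c}\big(v_k^i - \sum_\ell p_\ell v_\ell^i\big)$. Summing over $i$ and using $\sum_i \vect{v}^i = \vect{V}$ yields $\sum_i a_k^i = \frac{p_k}{2c}\big(V_k - \sum_\ell p_\ell V_\ell\big) = A_k$ by (\ref{eq:aggregate-foc}); in particular the induced aggregate vote equals $\vect{A}$, so the $\vect{p}$ appearing on the right-hand side is genuinely the softmax of $\sum_i \vect{a}^i$, and $\vect{a}$ indeed satisfies (\ref{eq:foc}) while summing to $\vect{A}$. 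The computation throughout is routine; the only thing needing care is the fixed-point nature of (\ref{eq:foc})--(\ref{eq:aggregate-foc}) --- the probabilities $p_k$ on the right are themselves determined by the votes --- which is exactly why, in this last step, one must check that the constructed $\vect{a}$ aggregates back to the $\vect{A}$ that generated $\vect{p}$. I do not foresee any real obstacle beyond this bookkeeping.
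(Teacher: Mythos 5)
Your proposal is correct and follows essentially the same route the paper intends: differentiate the softmax to get the first-order condition (\ref{eq:foc}), sum over agents for (\ref{eq:aggregate-foc}), invoke the strict concavity of Lemma \ref{lem:concave-utils} so that stationarity is equivalent to best-responding when $c \geq \tfrac{1}{2}\max_{i,k} v_k^i$, and construct the individual votes directly from (\ref{eq:foc}) for the last claim, checking they aggregate back to $\vect{A}$. No gaps.
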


Summing over Equations \ref{eq:foc} and \ref{eq:aggregate-foc}, we obtain:
\begin{corollary} \label{cor:sum-indiv-votes-0}
	The sum of votes for each agent across alternatives is $0$ in equilibrium.
    The sum of aggregate votes across alternatives is $0$ in equilibrium: $\sum_{k=1}^m A_k = 0.$
\end{corollary}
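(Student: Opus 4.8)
The plan is to read off both identities directly from the first-order conditions in Lemma \ref{lem:foc}, which hold at every pure-strategy Nash equilibrium of the QTM regardless of the choice of $c$.

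First I would start from Equation \ref{eq:foc}, $a_k^i = \frac{p_k}{2c}\left(v_k^i - \sum_{\ell} p_{\ell} v_{\ell}^i\right)$, and sum both sides over the alternatives $k = 1, \ldots, m$. On the right-hand side, the factor $\frac{1}{2c}$ and the quantity $\sum_{\ell} p_{\ell} v_{\ell}^i$ are independent of $k$, so they can be pulled outside the sum, leaving $\frac{1}{2c}\left(\sum_k p_k v_k^i - \left(\sum_k p_k\right)\left(\sum_{\ell} p_{\ell} v_{\ell}^i\right)\right)$. Since $\vect{p} \in \Delta_m$, we have $\sum_k p_k = 1$, so the two terms inside the parentheses are equal and cancel, yielding $\sum_k a_k^i = 0$. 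This gives the first claim.

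For the aggregate statement, I would either repeat the same one-line computation starting from Equation \ref{eq:aggregate-foc} with $V_k$ in place of $v_k^i$ and $A_k$ in place of $a_k^i$, or simply interchange the order of summation and use the first part: $\sum_k A_k = \sum_k \sum_i a_k^i = \sum_i \left(\sum_k a_k^i\right) = \sum_i 0 = 0$. There is no real obstacle here — the corollary is an immediate consequence of Lemma \ref{lem:foc} and the normalization $\sum_k p_k = 1$ — and the only point requiring any care is that Lemma \ref{lem:foc} guarantees Equations \ref{eq:foc} and \ref{eq:aggregate-foc} at \emph{any} equilibrium, so the conclusion holds without imposing a lower bound on $c$.
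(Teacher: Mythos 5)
Your proof is correct and matches the paper's own argument, which likewise obtains the corollary by summing Equations \ref{eq:foc} and \ref{eq:aggregate-foc} over $k$ and using $\sum_k p_k = 1$. Nothing further is needed.
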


An equilibrium is unique if Equation \ref{eq:aggregate-foc} has a unique solution $\vec{A}$.
In particular, we now prove Proposition \ref{prop:equilibrium-uniqueness}, which states that equilibria are unique when in the two alternative setting:

\begin{proposition} \label{prop:equilibrium-uniqueness}
	In the QTM on agent value profile $\vect{v}$ for $m = 2$ alternatives, if the mechanism chooses $c \geq \tfrac{1}{2} \max_{i,k} v_k^i$, then there is a unique pure-strategy Nash equilibrium. 
\end{proposition}

\begin{proof}[Proposition \ref{prop:equilibrium-uniqueness}]
	Fix all values $\vect{v}$ and set $c \geq \frac{1}{2} \max_{i, k} v_k^i$.
	We proceed by contradiction: assume two distinct equilibrium solutions exist $(p_1, p_2)$ and $(\hat{p}_1, \hat{p}_2)$.
	WLOG, assume $p_1 > \hat{p}_1$. 
	Since $V_1 > V_2$, by Corollary \ref{cor:two-alt-foc} we know $p_1 > \hat{p}_1 > 1/2$.
	It follows that $A_1 > \hat{A}_1$ since the function $\frac{1}{1 + e^{-2x}}$ is monotone increasing and $p_k = \frac{1}{1 + e^{-2A_k}}$ for $k = 1, 2$.
	
	By FOC (Corollary \ref{cor:two-alt-foc}), then, $A_1 - \hat{A}_1 = (p_1 p_2 - \hat{p}_1 \hat{p}_2) \frac{1}{2c} (V_1 - V_2) > 0$.
	Since $V_1 - V_2 > 0$, it follows that $p_1 p_2 - \hat{p}_1 \hat{p}_2 > 0$.
	In other words, $p_1 (1 - p_1) > \hat{p}_1 (1 - \hat{p}_1).$
	However, the function $x(1-x)$ is monotone decreasing for $x \geq 1/2$, so this is a contradiction since we assumed $p_1 > \hat{p}_1 > 1/2$.
	The result follows.
\end{proof}

\subsection{Two Alternatives} \label{app:two-alternatives}
This section contains proofs from Section \ref{sec:two-alternatives}.

We begin by proving Theorem \ref{thm:qtm-main-thm},.
We first observe an equivalent representation of the first-order conditions in the two-alternative case.
\begin{corollary} \label{cor:two-alt-foc}
  In the two-alternative case, in equilibrium we have $A_1 = \frac{p_1 p_2}{2c}(V_1 - V_2)$.
\end{corollary}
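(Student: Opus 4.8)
The plan is to derive this directly from the aggregate first-order condition in Lemma~\ref{lem:foc} (Equation~\ref{eq:aggregate-foc}), specializing to $m=2$. First I would write down Equation~\ref{eq:aggregate-foc} for $k=1$, namely $A_1 = \frac{p_1}{2c}\bigl(V_1 - \sum_{\ell} p_\ell V_\ell\bigr)$. With only two alternatives, $\sum_{\ell} p_\ell V_\ell = p_1 V_1 + p_2 V_2$, and since $\vect{p} \in \Delta_2$ we have $p_2 = 1 - p_1$.

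Next I would simplify the parenthesized term: $V_1 - p_1 V_1 - p_2 V_2 = (1 - p_1) V_1 - p_2 V_2 = p_2 V_1 - p_2 V_2 = p_2 (V_1 - V_2)$. Substituting back gives $A_1 = \frac{p_1 p_2}{2c}(V_1 - V_2)$, as claimed. (As a sanity check one can also note that this is consistent with Corollary~\ref{cor:sum-indiv-votes-0}: $A_1 + A_2 = 0$, and applying the same computation to $k=2$ yields $A_2 = \frac{p_1 p_2}{2c}(V_2 - V_1) = -A_1$.)

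There is essentially no obstacle here; the statement is a one-line algebraic specialization of an already-established equilibrium characterization, and the only thing to be careful about is correctly using $p_1 + p_2 = 1$ to collapse the ``expected value'' term. The reason it is worth stating separately is that this compact form $A_1 = \frac{p_1 p_2}{2c}(V_1 - V_2)$ is the one that feeds into the Price of Anarchy analysis in the rest of Appendix~\ref{app:two-alternatives}, where we will treat it as a scalar fixed-point equation in $p_1$ and analyze how $p_1 \to 1$ as the gap $V_1 - V_2$ grows.
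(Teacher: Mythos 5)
Your proposal is correct and matches the paper's argument in substance: both are one-line specializations of Lemma~\ref{lem:foc} using $p_2 = 1 - p_1$ to collapse the expected-value term. The only cosmetic difference is that the paper performs the algebra on the individual condition $a_1^i = \frac{p_1}{2c}(v_1^i - \expvi)$ and then sums over $i$, whereas you work directly with the aggregate condition (Equation~\ref{eq:aggregate-foc}); the computations are identical.
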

\begin{proof}
  By Lemma \ref{lem:foc},
  \begin{align*}
  	a_1^i &= \frac{p_1}{2c} (v_1^i - \expvi) \\
  	&= \frac{p_1}{2c} (v_1^i - p_1 v_1^i - (1 - p_1) v_2^i) \\
  	&= \frac{p_1 p_2}{2c} (v_1^i - v_2^i). %
  \end{align*}
  It follows that
  \begin{equation*}
  	A_1 =  \frac{p_1 p_2}{2c} (V_1 - V_2) . %
  \end{equation*}
\end{proof}

Our first result is that the Price of Anarchy in the two-alternative case always exceeds $1/2$.
This follows because, by Corollary \ref{cor:two-alt-foc} and $V_1 \geq V_2$, there are always at least as many votes for alternative $1$ as  for $2$.
\begin{proposition} \label{prop:poa-greater-half}
	In any pure-strategy equilibrium of the two-alternative QTM, $p_1 \geq 1/2$, with strict inequality if $V_1 > V_2$. 
	In particular, if $c \geq \frac{\max_{i, k} v_k^i}{2}$, then $\text{pPoA}(\vect{v}) > 1/2$. 
\end{proposition}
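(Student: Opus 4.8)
The plan is to read the sign of the aggregate vote $A_1$ directly off the equilibrium first-order conditions and then translate it into the claimed bound on $p_1$. By Corollary~\ref{cor:two-alt-foc}, in any pure-strategy equilibrium of the two-alternative QTM we have $A_1 = \frac{p_1 p_2}{2c}(V_1 - V_2)$, and by Corollary~\ref{cor:sum-indiv-votes-0} (with $m=2$) we have $A_2 = -A_1$. Since the softmax rule always assigns full support, $p_1, p_2 \in (0,1)$, so $p_1 p_2 > 0$; combined with $c > 0$ and the normalization $V_1 \geq V_2$, this yields $A_1 \geq 0$, with $A_1 > 0$ precisely when $V_1 > V_2$.

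Next I would convert this into the statement about $p_1$. From $A_2 = -A_1$ we get $\frac{p_1}{p_2} = e^{A_1 - A_2} = e^{2A_1} \geq 1$, hence $p_1 \geq p_2$, i.e. $p_1 \geq \tfrac12$; and when $V_1 > V_2$ the inequality $A_1 > 0$ is strict, so $p_1 > \tfrac12$. This gives the first sentence of the proposition.

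Finally, for the Price of Anarchy claim I would first observe that the hypothesis $c \geq \tfrac12 \max_{i,k} v_k^i$ places us in the regime of Proposition~\ref{prop:equilibrium-existence}, so a pure-strategy Nash equilibrium exists and $\text{pNE}(\vect{v})$ is nonempty; thus the ratio defining $\text{pPoA}(\vect{v})$ in \ref{eq:poa} is actually attained rather than vacuous. For any equilibrium distribution $\vect{p}$, the welfare is $p_1 V_1 + p_2 V_2 = V_2 + p_1 (V_1 - V_2)$. If $V_1 = V_2$ this equals $V_1$, so the ratio is $1 > \tfrac12$. If $V_1 > V_2$, then $p_1 > \tfrac12$ from the previous step gives $p_1 V_1 + p_2 V_2 > V_2 + \tfrac12 (V_1 - V_2) = \tfrac12 (V_1 + V_2) \geq \tfrac12 V_1$, so again the ratio strictly exceeds $\tfrac12$. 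Minimizing over equilibria yields $\text{pPoA}(\vect{v}) > \tfrac12$.

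I do not expect a genuine obstacle here: the content is just the observation that higher total value forces (weakly) more aggregate votes, hence a (weakly) higher selection probability for alternative $1$. The only points needing a little care are (i) using full support of the softmax to know $p_1 p_2 > 0$, which is exactly what makes the inequality on $A_1$ strict iff $V_1 > V_2$, and (ii) invoking Proposition~\ref{prop:equilibrium-existence} so the pPoA bound is not vacuous; the edge case $V_1 = V_2$ should be handled explicitly, since there $p_1 = \tfrac12$ is possible while the welfare ratio is nonetheless $1$.
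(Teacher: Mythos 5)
Your proposal is correct and follows essentially the same route as the paper: read off $A_1 = \frac{p_1 p_2}{2c}(V_1 - V_2) \geq 0$ and $A_2 = -A_1$ from the first-order conditions, conclude $p_1 \geq p_2$, and handle the $V_1 = V_2$ case separately for the pPoA claim. Your explicit invocation of Proposition~\ref{prop:equilibrium-existence} for non-vacuousness and the slightly more detailed welfare computation (via $V_2 + p_1(V_1 - V_2)$ rather than the paper's $\text{pPoA} \geq p_1$, which uses $V_2 \geq 0$) are minor, harmless elaborations.
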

\begin{proof}
	$p_1 = \frac{e^{A_1}}{e^{A_1} + e^{A_2}}$ and $p_2 = \frac{e^{A_2}}{e^{A_1} + e^{A_2}}$. 
	By Lemma \ref{lem:foc}, $A_1 = \frac{p_1 p_2}{2c} (V_1 - V_2) \geq 0$.
	Also by Lemma \ref{lem:foc}, $A_2 = -A_1 \leq 0$.
	Thus $p_1 \geq p_2$, so $p_1 \geq 1/2$.
	If $V_1 > V_2$, then $A_1 > 0$ and $p_1 > p_2$.
	If $V_1 = V_2$ then $\text{pPoA} = 1$; otherwise, $\text{pPoA}(\vect{v}) \geq p_1 > 1/2$.
\end{proof}

We now consider the case where there is an additive gap between the aggregate values of the alternatives.
This is the main result behind both Theorem \ref{thm:qtm-main-thm} and Corollary \ref{cor:qtm-gap-poa}.
\begin{lemma} \label{lem:additive-gap-poa}
	In any pure-strategy equilibrium of the two-alternative QTM,
		\[ p_1 \geq 1 - \left(\frac{8c}{V_1 - V_2}\right)^{2/3} . \]
\end{lemma}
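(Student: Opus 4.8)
The plan is to work directly with the first-order characterization of equilibrium. By Corollary \ref{cor:two-alt-foc}, in any pure-strategy equilibrium we have $A_1 = \frac{p_1 p_2}{2c}(V_1 - V_2)$ and $A_2 = -A_1$, so $p_1 = \frac{e^{A_1}}{e^{A_1}+e^{A_2}} = \frac{1}{1 + e^{-2A_1}}$. Writing $p_2 = 1 - p_1$, the goal is equivalent to showing $p_2 \leq \left(\frac{8c}{V_1 - V_2}\right)^{2/3}$. From the relation $A_1 = \frac{p_1 p_2}{2c}(V_1-V_2)$ and the logistic form, I would derive a single inequality in $p_2$ alone: since $p_1 \geq 1/2$ (Proposition \ref{prop:poa-greater-half}), we get $A_1 \geq \frac{p_2}{4c}(V_1-V_2)$. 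On the other hand, $p_2 = \frac{1}{1+e^{2A_1}} \leq e^{-2A_1}$. Combining, $p_2 \leq \exp\!\left(-\frac{p_2 (V_1-V_2)}{2c}\right)$.

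The remaining step is purely analytic: from $p_2 \leq \exp(-\lambda p_2)$ with $\lambda := \frac{V_1 - V_2}{2c}$, deduce $p_2 \leq (8c/(V_1-V_2))^{2/3} = (4/\lambda)^{2/3}$. I would argue by contradiction or monotonicity: if $p_2 > (4/\lambda)^{2/3}$, then $\lambda p_2 > 4^{2/3} \lambda^{1/3}$, so $\exp(-\lambda p_2) < \exp(-4^{2/3}\lambda^{1/3})$, and it suffices to check $\exp(-4^{2/3}\lambda^{1/3}) \leq (4/\lambda)^{2/3}$, i.e. after taking logs, $4^{2/3}\lambda^{1/3} \geq \tfrac{2}{3}\ln(\lambda/4)$ — a standard inequality ($e^x$ dominates any power) that holds for all $\lambda > 0$, so that the claimed bound is in fact vacuous unless $\lambda$ is large, and otherwise follows. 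Actually the cleanest route is: the function $p \mapsto p e^{\lambda p}$ is increasing, so $p_2 e^{\lambda p_2} \leq 1$ forces $p_2 \leq q$ where $q$ is the unique solution of $q e^{\lambda q} = 1$; then bound $q$ by showing $q_0 := (4/\lambda)^{2/3}$ satisfies $q_0 e^{\lambda q_0} \geq 1$, which reduces to $\lambda q_0 = 4^{2/3}\lambda^{1/3} \geq \ln(1/q_0) = \tfrac{2}{3}\ln(\lambda/4)$, again an instance of linear-in-$\lambda^{1/3}$ beating logarithmic. When $\lambda \leq 4$ the bound $q_0 \geq 1 \geq p_2$ is trivial, so only $\lambda > 4$ needs the calculation.

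The main obstacle is getting the constants to line up so that the exponent is exactly $2/3$ and the numerator is exactly $8c$; the choice of $q_0 = (4/\lambda)^{2/3}$ is dictated by wanting $\lambda q_0 = 4^{2/3}\lambda^{1/3}$ to grow faster than $\ln(1/q_0)$ while keeping the final expression clean, and one should double-check the factor-of-two slack introduced by using $p_1 \geq 1/2$ rather than $p_1 \to 1$ (which is what makes $8c$ appear rather than $4c$). I would also note that the bound is only meaningful when $V_1 - V_2 > 8c$; for smaller gaps it holds trivially since the right side exceeds $1 \geq p_1$ would be false — rather, $1 - (8c/(V_1-V_2))^{2/3}$ is negative, so the inequality $p_1 \geq 1 - (\cdots)$ is automatic. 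This lemma then feeds into Theorem \ref{thm:qtm-main-thm} by balancing it against the trivial bound $\text{pPoA} \geq V_2/V_1$ across the two regimes "$V_2$ large" and "gap large."
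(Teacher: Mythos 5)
Your proposal is correct and follows essentially the same route as the paper: both start from the first-order condition $A_1 = \frac{p_1 p_2}{2c}(V_1 - V_2)$ together with $A_2 = -A_1$, bound $p_2 \le e^{-2A_1}$, and feed a lower bound on $p_1 p_2$ back into the first-order condition to get a self-referential inequality whose solution yields the $2/3$ exponent and the constant $8c$. The only divergence is in how that inequality is closed: the paper works in the variable $A_1$ (obtaining $A_1 e^{2A_1} \ge \frac{V_1 - V_2}{8c}$ and finishing with $A_1 \le e^{A_1}$), whereas you work in $p_2$ (obtaining $p_2 e^{\lambda p_2} \le 1$ with $\lambda = \frac{V_1-V_2}{2c}$ and verifying the candidate $q_0 = (4/\lambda)^{2/3}$ by monotonicity); your required check $4^{2/3}\lambda^{1/3} \ge \tfrac{2}{3}\ln(\lambda/4)$ does hold for all $\lambda > 0$, so the argument goes through.
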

\begin{proof}
	Because votes sum to zero, $A_1 = -A_2$, and we can write $p_2 = e^{-A_1} / (e^{A_1} + e^{-A_1})$.
	Furthermore, by Corollary \ref{cor:two-alt-foc}, in equilibrium we have the following, and use the observation that $A_1 \geq 0$:
	\begin{align}
		A_1	&= \frac{p_1 p_2}{2c}(V_1 - V_2)  \nonumber \\
			&= \frac{V_1 - V_2}{2c \left(e^{A_1} + e^{-A_1}\right)^2}  \nonumber \\
			&\geq \frac{V_1 - V_2}{2c \left(2e^{A_1}\right)^2}  \nonumber \\
		\implies \quad A_1 e^{2A_1} &\geq \frac{V_1 - V_2}{8c}  \label{eq:poa-tighter} \\
		\implies \quad e^{A_1} &\geq \left(\frac{V_1 - V_2}{8c}\right)^{1/3} ,  \label{eq:poa-looser}
	\end{align}
	where the final inequality follows because $A_1 \leq e^{A_1}$, after which we cube-root both sides.

	Now, if $x := e^{A_1}$, then $p_2 = \frac{1/x}{x + 1/x} = \frac{1}{x^2 + 1} \leq \frac{1}{x^2}$.
	By (\ref{eq:poa-looser}), $p_2 \leq \left(\frac{8c}{V_1 - V_2}\right)^{2/3}$, proving the bound.
\end{proof}

\begin{remark}
	An asymptotically better convergence rate than in Theorem \ref{thm:qtm-main-thm} and Corollary \ref{cor:qtm-gap-poa} is achievable.
	By (\ref{eq:poa-tighter}), for all $\alpha > 0$, $e^{A_1 (2 + \alpha)} = \Omega\left(\frac{V_1 - V_2}{8c}\right)$.
	Following the same logic as Lemma \ref{lem:additive-gap-poa} and Corollary \ref{cor:qtm-gap-poa}, we obtain that for all $\epsilon > 0$, $\text{pPoA}(\vect{v}) \geq 1 - O\left(\left(\frac{1}{G}\right)^{1-\epsilon}\right)$.
	Following the same logic as in the proof of Theorem \ref{thm:qtm-main-thm}, it will turn out that the factor $\left(\frac{2}{T}\right)^{2/5}$ can be improved to $O\left(\frac{2}{T}\right)^{\frac{1}{2} - \epsilon}$ for any $\epsilon > 0$, at the cost of a correspondingly large constant factor.
\end{remark}

\begin{proof}[Theorem \ref{thm:qtm-main-thm}]
	Let $c = \tfrac{1}{2} \max_{i,k} v_k^i$.
	In this case, by Proposition \ref{prop:equilibrium-existence}, a pure-strategy equilibrium is guaranteed to exist.
	Let $Y = (8c)^{2/5}(2V_1)^{3/5}$, a magically chosen quantity that balances the following cases.

	Case $V_1 - V_2 \leq Y$: here, we observe that the Price of Anarchy is
	\begin{align*}
		\frac{p_1 V_1 + p_2 V_2}{V_1}
		&\geq \frac{V_1 + V_2}{2V_1}  & \text{using $p_1 \geq \tfrac{1}{2}$}  \\
		&\geq \frac{2V_1 - Y}{2V_1}  \\
		&=    1 - \frac{Y}{2V_1}  \\
		&=    1 - \left(\frac{4c}{V_1}\right)^{2/5} .
	\end{align*}

	Case $V_1 - V_2 \geq Y$: here, we apply Lemma \ref{lem:additive-gap-poa} to obtain that the Price of Anarchy exceeds
	\begin{align*}
		p_1 &\geq 1 - \left(\frac{8c}{Y}\right)^{2/3}  \\
		    &=    1 - \left(\frac{4c}{V_1}\right)^{2/5} .
	\end{align*}
	In both cases, we obtain $1 - \left(\frac{2}{T}\right)^{2/5}$.
\end{proof}

\paragraph{Total payments.}
Define the \emph{revenue} of the QTM to be the total payments of the agents before redistribution, $c \sum_i \sum_k (a_k^i)^2$.
It is interesting, and will be useful later, to understand the relationship of revenue to the type profile $\vect{v}$.
As with the pPoA analysis, we can analyze revenue using the first-order conditions.
\begin{proposition} \label{prop:revenue}
	In the QTM with two alternatives and type profile $\vect{v}$, with $V_1 > V_2$, define the \emph{disagreement} to be $D := \frac{\sum_i (v_1^i - v_2^i)^2}{(V_1 - V_2)^2}$.
	Then in any pure-strategy equilibrium,\footnote{Here $\Theta(\cdot)$ is with respect to $V_1 - V_2$ growing, or at least bounded below. The proof includes fully explicit nonasymptotic bounds.}
		\[ \text{revenue} = \Theta\left( c ~ D \left(\ln \tfrac{V_1 - V_2}{c} \right)^2 \right) . \]
\end{proposition}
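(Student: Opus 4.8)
The plan is to reduce the revenue to a closed form in the single equilibrium parameter $p \coloneqq p_1$ (the softmax probability of the better alternative), and then control $p$ through the aggregate first-order condition.

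\textbf{Step 1: closed form for revenue.} Instantiating Lemma~\ref{lem:foc} with $m=2$ and writing $\Delta \coloneqq V_1 - V_2 > 0$, the identity $v_1^i - \sum_\ell p_\ell v_\ell^i = (1-p)(v_1^i - v_2^i)$ turns~(\ref{eq:foc}) into $a_1^i = \tfrac{p(1-p)}{2c}(v_1^i - v_2^i) = -a_2^i$ (consistent with Corollary~\ref{cor:sum-indiv-votes-0}), so $\sum_k (a_k^i)^2 = \tfrac{p^2(1-p)^2}{2c^2}(v_1^i-v_2^i)^2$ and therefore
\[ \text{revenue} \;=\; c\sum_i\sum_k (a_k^i)^2 \;=\; \frac{p^2(1-p)^2}{2c}\sum_i (v_1^i - v_2^i)^2 . \]
The same computation applied to~(\ref{eq:aggregate-foc}) gives $A_1 - A_2 = \tfrac{p(1-p)}{c}\Delta$.

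\textbf{Step 2: the equilibrium equation and an exact identity.} By definition of the softmax output, $\ln\tfrac{p}{1-p} = A_1 - A_2$, so in any pure-strategy equilibrium $p$ solves the transcendental equation $\ln\tfrac{p}{1-p} = \tfrac{p(1-p)}{c}\Delta$; since $\Delta>0$ this forces $p \in (\tfrac12,1)$. Setting $L \coloneqq \ln\tfrac{p}{1-p} > 0$ and substituting $p(1-p) = \tfrac{cL}{\Delta}$ into the revenue formula yields the exact identity
\[ \text{revenue} \;=\; \frac{c}{2}\,L^2\cdot\frac{\sum_i (v_1^i - v_2^i)^2}{\Delta^2} \;=\; \frac{c\,D}{2}\,L^2 . \]
So the proposition is equivalent to the claim $L = \Theta\!\big(\ln\tfrac{\Delta}{c}\big)$.

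\textbf{Step 3: pinning down $L$.} Since $e^L = \tfrac{p}{1-p}$ while $L = \tfrac{p(1-p)\Delta}{c}$, multiplying collapses the equilibrium equation to the clean form $L e^{L} = \tfrac{p^2\Delta}{c}$, and $p\in(\tfrac12,1)$ gives the two-sided bound $\tfrac{\Delta}{4c}\le L e^{L}\le\tfrac{\Delta}{c}$; equivalently $L = W(\alpha)$ for some $\alpha\in[\tfrac{\Delta}{4c},\tfrac{\Delta}{c}]$, with $W$ the Lambert $W$ function. Taking logarithms and using the elementary inequalities $0\le\ln L\le L$ (the left one valid as soon as $L\ge 1$, which itself is forced once $\Delta/c$ exceeds a small explicit constant) gives $\tfrac13\ln\tfrac{\Delta}{c}\le L\le\ln\tfrac{\Delta}{c}$ for all $\Delta/c$ above an explicit threshold, hence $L^2 = \Theta\!\big((\ln\tfrac{\Delta}{c})^2\big)$ with explicit constants; combined with $\text{revenue} = \tfrac{cD}{2}L^2$ this proves the bound. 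Because the sandwich $\tfrac{\Delta}{4c}\le Le^L\le\tfrac{\Delta}{c}$ holds for the $p$ arising in \emph{every} pure-strategy equilibrium, the $\Theta$ is uniform over equilibria even without a uniqueness result.

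The only real obstacle is Step~3: Steps~1--2 are bookkeeping with the first-order conditions, but the answer hinges on solving the fixed-point equation for $p$ to logarithmic accuracy. The useful structural fact is the collapse to $Le^L = \tfrac{p^2\Delta}{c}$, which simultaneously produces the two-sided bound and explains the appearance of the $\ln\tfrac{\Delta}{c}$ factor. The remaining work is just tracking the explicit constants and thresholds needed for the nonasymptotic statement, plus noting the degenerate regime where $\Delta$ is merely bounded below rather than large, in which $p$, $L$, and hence $\text{revenue}/(cD)$ are all $\Theta(1)$.
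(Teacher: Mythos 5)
Your proposal is correct and follows essentially the same route as the paper's proof: both express the revenue in closed form as $c\,D$ times the square of the equilibrium log-odds (your $L$ is exactly $2A_1$ in the paper's notation), and both sandwich that quantity logarithmically via the fixed-point equation, using $p^2\in(\tfrac14,1)$ just as the paper bounds $p_1p_2$ between $\tfrac{1}{4e^{2A_1}}$ and $\tfrac{1}{e^{2A_1}}$. The Lambert-$W$ packaging $Le^L=\tfrac{p^2\Delta}{c}$ is a tidy reformulation but not a different argument.
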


\begin{proof}
	We heavily use that, in equilibrium, each individual's votes sum to zero.
  
	By the first-order conditions , $A_1 = \frac{p_1 p_2}{2c}(V_1 - V_2)$, or rearranging,
	\begin{align}
	  p_1 p_2 &= \frac{2 c A_1}{V_1 - V_2} . \label{eqn:p1p2-foc}
	\end{align}
	Also by , $a_1^i = \frac{p_1 p_2}{2c}(v_1^i - v_2^i)$.
	So the revenue is
	\begin{align}
	  \sum_i \left[ c (a_1^i)^2 + c (a_2^i)^2 \right]
	  &= 2c \sum_i (a_1^i)^2  \nonumber \\
	  &= 2c \left(\frac{p_1 p_2}{2c}\right)^2 \sum_i (v_1^i - v_2^i)^2  \label{eq:budget-helper-1} \\
	  &= 2c \left(\frac{A_1}{V_1 - V_2}\right)^2 \sum_i (v_1^i - v_2^i)^2  \nonumber \\
	  &= 2c (A_1)^2 \frac{\sum_i (v_1^i - v_2^i)^2}{(V_1 - V_2)^2} . \label{eq:budget-helper-2}
	\end{align}
	It only remains to bound $(A_1)^2$.
	\begin{align}
	  p_1 p_2 &= \frac{e^{A_1} e^{-A_1}}{(e^{A_1} + e^{-A_1})^2}  \nonumber \\
			  &= \frac{1}{(e^{A_1} + e^{-A_1})^2}  \nonumber \\
			  &\geq \frac{1}{(2e^{A_1})^2}  \label{eq:revenue-ineq} \\
			  &= \frac{1}{4e^{2A_1}} .  \nonumber
	\end{align}
	So
	\begin{align*}
	  A_1 &=    \frac{p_1 p_2}{2c}(V_1 - V_2)  \\
		  &\geq \frac{V_1 - V_2}{8c e^{2A_1}} .
	\end{align*}
	Rearranging and taking the logarithm of both sides,
	\begin{align*}
	  \ln\left(\frac{V_1 - V_2}{8c}\right)
	  &\leq 2 A_1 + \ln(A_1)  \\
	  &\leq 3 A_1 .
	\end{align*}
	It follows that~%
	\begin{align*}
	  A_1^2 &\geq \left(\max\left\{ 0 ~,~ \frac{1}{3}\ln \tfrac{V_1 - V_2}{8c} \right\} \right)^2 .
	\end{align*}
	On the other hand, we can replace Inequality \ref{eq:revenue-ineq} as follows:
	\begin{align*}
	  p_1 p_2 &= \frac{1}{(e^{A_1} + e^{-A_1})^2}  \\
	  &\leq \frac{1}{(e^{A_1})^2} ,
	\end{align*}
	and continuing with almost the same analysis, using $A_1 \geq 0$ we obtain
	\begin{align*}
	  \ln\left(\frac{V_1 - V_2}{2c}\right) &\geq 2A_1 .
	\end{align*}
	We have obtained
	  \[ \max \left\{ 0 ~,~ \frac{1}{3}\ln\left(\frac{V_1 - V_2}{8c}\right) \right\} \leq A_1 \leq \frac{1}{2}\ln\left(\frac{V_1 - V_2}{2c}\right) , \]
	so
	  \[ \frac{2c}{9} \frac{\sum_i (v_1^i - v_2^i)^2}{(V_1 - V_2)^2} \left(\max \left\{ 0 ~,~ \ln \tfrac{V_1 - V_2}{8c} \right\} \right)^2 \leq \text{revenue} \leq \frac{c}{2} \frac{\sum_i (v_1^i - v_2^i)^2}{(V_1 - V_2)^2} \left(\ln \tfrac{V_1 - V_2}{2c}\right)^2 . \]
  \end{proof}

In other words, the revenue is tightly controlled by the disagreement $D$ of the strategy profile.
For intuition, suppose the strengths of the agent preferences are identical with $|v_1^i - v_2^i| = 1$.
Then the numerator of $D$ is $n$.
Let $n_1$ be the number of agents with $v_1^i - v_2^i = 1$ and let $n_2$ be the number with $v_1^i - v_2^i = -1$.
Observe that $V_1 - V_2 = n_1 - n_2$, the number of agents who prefer alternative $1$ minus the number who prefer $2$.
\begin{itemize}
	\item If all agents agree, i.e. $n_1 = n$, then $V_1 - V_2 = n$ and the denominator is $n^2$; revenue converges to zero quickly.
	\item If $n_1 - n_2 = \Theta(\sqrt{n})$, then $D = \Theta(1)$, and revenue grows polylogarithmically in $n$.
	\item For $n_1 - n_2 \ll \sqrt{n}$, $D = \omega(1)$ and revenue grows more rapidly.
\end{itemize}
The disagreement $D$ has a natural statistical interpretation.
If we pick an agent $i$ uniformly at random and let $X = v_1^i - v_2^i$, then $D = \frac{1}{n} \left(\frac{\sigma^2 + \mu^2}{\mu^2}\right)$, where $\mu$ and $\sigma$ are the mean and standard deviation of $X$.

\subsection{QTM with m alternatives} \label{appendix:qtm-m-alternatives}

We briefly touch on the case where there are more than two alternatives.
\cite{Eguia19,Eguia23} give asymptotic results showing that in their settings, with a fixed distribution of bounded agents, as $n \to \infty$ the welfare of the QTM tends to optimality.
Here we can give a weaker but nonasymptotic bound of $\geq \frac{1}{m}$.
The proof is not immediate and involves dividing the alternatives into a group with ``high'' welfare (higher than the mechanism's expectation) and the remainder, an idea from \cite{Eguia23} used in a different way.

\begin{proof}[Proof of \ref{prop:m-poa}]
	By Proposition \ref{prop:equilibrium-existence}, for this regime of $c$, a pure-strategy equilibrium exists.
	Assume WLOG that $V_1 \geq \cdots \geq V_m$.
	Recall that $\vect{p}$ is the mechanism's output, i.e. distribution on alternatives, and the welfare is $\sum_{\ell} p_{\ell} V_{\ell}$.
	Let
		\[ \bar{V} \coloneq \sum_{\ell} p_{\ell} V_{\ell} , \]
	the welfare of the mechanism.
	We seek to prove $\frac{\bar{V}}{V_1} \geq \frac{1}{m}$.
	We will lean on the first-order conditions for equilibrium (Lemma \ref{lem:foc}): reproducing Equation (\ref{eq:aggregate-foc}), for each alternative $k$,
	\begin{equation} \label{eq:foc-restated-m}
		A_k  = \frac{p_k}{2c} \left(V_k - \bar{V} \right) .
	\end{equation}

	We will consider the following subsets of the alternatives:
	\begin{itemize}
		\item $S = \{k : V_k \geq \bar{V} \}$.
		\item $T = \{k : V_k < \bar{V} \}$.
		\item $X = \{1\} \cup T$.
		\item $Y = S \setminus \{1\}$.
	\end{itemize}
	Observe that $1 \in S$ and $m \in T$ by Equation \ref{eq:foc-restated-m}.
	Let $p(S) = \sum_{k \in S} p_k$, the probability of an outcome in $S$, and similarly for $p(T),p(X),p(Y)$.
	Let $w(S) = \frac{1}{p(S)} \sum_{k \in S} p_k V_k$, the expected welfare conditioned on picking an outcome in $S$, and similarly for $w(T),w(X),w(Y)$.
	We will deal with the case $|Y| = 0$ separately.

	We first claim $\frac{w(X)}{p(X)} \geq \frac{1}{m}$, or in other words, the expected welfare conditioned on being in $X$ exceeds $\frac{V_1}{m}$.
	By (\ref{eq:foc-restated-m}), $A_1 > 0$ and $A_k < 0$ for all $k \in T$.
	Therefore, $e^{A_1} > e^{A_k}$ for all $k \in T$.
	Therefore,
	\begin{align*}
		w(X)	&= 		\frac{1}{p(X)} \left(p_1 V_1 + \sum_{k \in T} p_k V_k\right)  \\
				&\geq	\frac{p_1}{p(X)} V_1  \\
				&=		\frac{e^{A_1}}{e^{A_1} + \sum_{k \in T} e^{A_k}} V_1  \\
				&\geq	\frac{e^{A_1}}{e^{A_1} (|T| + 1)} V_1 \\
				&=		\frac{V_1}{|T|+1}  \\
				&\geq	\frac{V_1}{m} .
	\end{align*}
	Now, if $|Y| = 0$, then $|X| = m$ and $w(X) = \text{Welfare}$, and we are done.
	Otherwise, we claim $w(Y) \geq \bar{V}$, because for all $k \in Y$, since $k \in S$, we have $V_k \geq \bar{V}$.
	But since $Y$ and $Z$ partition the alternatives,
	\begin{align*}
		\bar{V}	&=		p(Y) w(Y) + p(Z) w(Z)  \\
				&\geq 	p(Y) \frac{V_1}{m} + p(Z) \bar{V}  \\
		\implies \bar{V} \geq \frac{V_1}{m} .
	\end{align*}
\end{proof}

It is not clear if this bound can be improved non-asymptotically.
The analysis of the many-alternatives case is very complex (see \cite{Eguia19,Eguia23}).
As far as we currently know, it may be possible to have instances of a large ``spread'' $\frac{V_1}{\max_{i,k} v_k^i}$, perhaps exponentially large in $m$, with pure-strategy Price of Anarchy bounded away from one, or perhaps as small as $O(1/m)$.

\section{SQUAP Price of Anarchy} \label{app:predictions}

\begin{proof}[Proof of Theorem \ref{thm:general-aggr-general-m}]
	By Proposition \ref{prop:synthetic-fixed-ppoa-bb-m}, we have $\min_{\vect{p} \in \text{pPNE}(\vect{v}, \D)} \frac{\sum_k p_k \hat{W}_k}{\hat{W}_1} \geq C(\vect{v})$, and we know $\max_k |\hat{B_k} - B_k| \leq \alpha  \max_{i,k} v_k^i$ by Lemma \ref{lem:aggr-bhat}. 
	It follows that, for $\vect{p} \in \arg\min_{\vect{p} \in \text{pPNE}(\vect{v}, \D)} \frac{\sum_k p_k W_k}{W_1}$,
	\begin{align*}
		\text{pPoA}(\vect{v},\D) &= \frac{\sum_k p_k W_k}{W_1} \\
		&\geq  \frac{\sum_k p_k \hat{W}_k - \alpha \max_{i,k} v_k^i}{\hat{W}_1 + \alpha \max_{i,k} v_k^i} \\
		&\geq \frac{\sum_k p_k \hat{W}_k - \alpha \max_{i,k} v_k^i}{\hat{B}_1 + (1 + \alpha) V_1} \\
		&\geq \frac{\sum_k p_k \hat{W}_k - \alpha \max_{i,k} v_k^i}{(1 + \alpha)\hat{W}_1} \\\
		&\geq \frac{1}{1 + \alpha} \left( C(\vect{v}) - \frac{\alpha}{T} \right),
	\end{align*}
	where the second inequality follows from the fact that $\alpha \max_{i,k} v_k^i \leq \alpha V_1$.
\end{proof}

\section{Prediction Markets} \label{app:prediction-market}

We prove the $(\alpha, x)$-robustness of the importance-weighted decision market.
\begin{proof}[Proof of Lemma~\ref{lem:market-deviation}]
	Observe that under the efficient market assumption, the decision market is equivalent to the following: run the entire market, where the last participant predicts $\vect{B^*}$.
	Then, give the last participant one more opportunity to participate; denote their final predictions by $\vect{\hat{B}}$.
	This is equivalent because of the telescoping sum of the market scoring rule, so that the two prediction opportunities have the same net payoff as one opportunity with predicting $\vect{\hat{B}}$.

	For any participant except the final one, under the efficient market assumption, their strategy does not affect the final predictions $\vect{\hat{B}}$.
	One way to see this is by the above equivalence, since the market will end at $\vect{B}^*$ regardless, followed by the last participant's possible manipulation.

	So we only need to analyze the incentives of the final participant $i$.
	By the decomposition above, we only need to consider the incentives for the final prediction opportunity (since the previous one is fixed to $\vect{B^*}$).
	The expected utility for $\vect{\hat{B}}$ is
	\begin{align*}
		S(\vect{\hat{B}};\vect{B^*}) - S(\vect{B^*};\vect{B^*})
		&= S(\vect{\hat{B}};\vect{B^*}) \\
		&= -\frac{1}{\beta}\sum_k (\hat{B}_k - B_k^*)^2.   & \text{(via Equation \ref{eq:expected-score})}
	\end{align*}
	By the definition of $x$-best-responding in the market,
	$x > \frac{1}{\beta} \sum_k (\hat{B}_k - B_k^*)^2$.
	So for all $k$, $(\hat{B}_k - B_k^*)^2 \leq \beta x = \epsilon x^2$, or $|\hat{B}_k - B_k^*| \leq \sqrt{\epsilon} x$.
\end{proof}

We prove that, in strategy profiles with high ``disagreement'', SQUAP with prediction markets is budget balanced in expectation.
\begin{proof}[Proof of Corollary \ref{cor:prediction-market-bb}]
  We prove that the mechanism can choose liquidity parameters $\beta$ so that the expected revenue is larger than the amount spent on the decision market stage in expectation.
  In practice, one would like to scale the prediction-market payments by the largest possible amount so that budget balance is achieved.
  However, we cannot rule out manipulation for reasons similar to the redistribution, so for analysis we assume that the mechanism designer commits to $\beta$ up front.
  By the telescoping nature of the proper scoring rule, the mechanism's total spend on the decision market stage in expectation is $S(\vect{\hat{B}};\vect{B^*}) - S(\vect{B^0},\vect{B^*}) \leq S(\vect{B^*};\vect{B^*}) - S(\vect{B^0};\vect{B^*}) = \frac{1}{\beta}\sum_k (B_k^0 - B_k^*)^2$.
  With at-least-constant disagreement, revenue is at least constant (by Proposition \ref{prop:revenue}), so $\beta = \Omega(1)$ is possible with ex ante budget-balance.
  By Corollary \ref{cor:market-poa}, in this case and with $T \to \infty$, the Price of Anarchy converges to $1$.
\end{proof}

\end{document}